\newcommand{\GR}{G_{\mathrm{R}} } 
\newcommand{\vx}{\mathbf{x}}
\newcommand{\vy}{\mathbf{y}}
\newcommand{\vm}{\mathbf{m}}
\newcommand{\vtx}{\mathbf{\tilde x}}
\newcommand{\vty}{\mathbf{\tilde y}}
\newcommand{\vz}{\mathbf{z}}
\newcommand{\vX}{\mathbf{X}}
\newcommand{\vZ}{\mathbf{Z}}
\newcommand{\vtz}{\mathbf{\tilde z}}
\newcommand{\vw}{\mathbf{w}}
\newcommand{\vtw}{\mathbf{\tilde w}}
\newcommand{\Ref}[1]{(\ref{#1})}
\newcommand{\ee}[1]{{\rm e}^{#1}}
\newcommand{\ii}{{\rm i}}
\newcommand{\R}{{\mathbb R}}
\newcommand{\C}{{\mathbb C}}
\newcommand{\Z}{{\mathbb Z}}
\newcommand{\N}{{\mathbb N}}
\newcommand{\cH}{{\mathcal  H}}
\newcommand{\cE}{{\mathcal  E}}
\newcommand{\cN}{{\mathcal  N}}
\newcommand{\cS}{{\mathcal  S}}
\newcommand{\cM}{\mathcal{M}} 
\newcommand{\cA}{\mathcal{A}}
\newcommand{\tx}{\tilde{x}}
\newcommand{\tN}{\tilde{N}} 
\newcommand{\ty}{\tilde{y}}
\newcommand{\tM}{\tilde{M}} 
\newcommand{\tz}{\tilde{z}}
\newtheorem{theorem}{Theorem}[section]
\newtheorem{lemma}[theorem]{Lemma}
\newtheorem{corollary}[theorem]{Corollary}
\theoremstyle{definition}
\theoremstyle{remark}
\newtheorem{remark}[theorem]{Remark}
\title{\Large{\bf{Source identities and kernel functions for deformed (quantum) Ruijsenaars models} }}
\date{\vspace{-0.5cm}\small\today\vspace{-0.5cm}}
\author[1,*]{F. Atai}
\author[2,\dag]{M. Halln{\"a}s}
\author[1,\ddag]{E. Langmann}
\affil[1]{Department of Theoretical Physics\\
KTH Royal Institute of Technology\\
SE-106 91 Stockholm, Sweden}\vspace{2mm} 
\affil[2]{Department of Mathematical Sciences, Loughborough University, Leicestershire LE11 3TU, UK}
\date{\vspace{-1.0cm}\small November 18, 2013}
\begin{document}
\maketitle

\let\oldthefootnote\thefootnote
\renewcommand{\thefootnote}{\fnsymbol{footnote}}
\footnotetext[1]{Electronic address: {\tt farrokh@kth.se}}
\footnotetext[2]{Electronic address: {\tt M.A.Hallnas@lboro.ac.uk}}
\footnotetext[3]{Electronic address: {\tt langmann@kth.se}}
\let\thefootnote\oldthefootnote

\vspace{-1.5cm}

\begin{abstract}
\noindent We consider the relativistic generalization of the quantum $A_{N-1}$ Calogero-Sutherland models due to Ruijsenaars, comprising the rational, hyperbolic, trigonometric and elliptic cases. For each of these cases, we find an exact common eigenfunction for a generalization of Ruijsenaars analytic difference operators that gives, as special cases, many different kernel functions; in particular, we find kernel functions for Chalykh- Feigin-Veselov-Sergeev-type deformations of such difference operators which generalize known kernel functions for the Ruijsenaars models.  We also discuss possible applications of our results. 

\begin{flushleft}{\bf Keywords:} {\scriptsize Exactly solvable models; Ruijsenaars models; Chalykh-Feigin-Veselov-Sergeev type deformation; Kernel functions}\end{flushleft}
\end{abstract}

\section{Motivation and results} 
One special attribute of (quantum) Calogero-Sutherland (CS) systems \cite{C,Su,OP} is the existence of simple and explicit {\em kernel functions} (a function $F(x,y)$ is called kernel function of a pair of Hamilton operators $H(x)$ and $\tilde H(y)$ if $(H(x)-\tilde H(y)-c)F(x,y)=0$, for some constant $c$). Such kernel functions are useful, for example, to construct eigenfunctions and eigenvalues of CS Hamiltonians \cite{EL0,EL1,HL,eCS,HR1}. For many CS systems a remarkable functional identity is known which can be regarded as source of all kernel functions: such {\em source identity} is typically easy to prove, and it implies all kernel function identities for the associated class of CS systems as special cases \cite{HL,sourceAN,sourceBCN} (the first example of a source identity is due to Sen \cite{Sen}, to our knowledge). In particular, source identities also provide kernel functions for certain deformations of the CS Hamiltonian \cite{CFV,Sergeev,SV} which, from a mathematical point of view, provide a natural generalization of the CS model. 

In this paper we present source identities for the relativistic generalizations of the $A_{N-1}$ CS systems due to Ruijsenaars \cite{R}, comprising the rational, trigonometric, hyperbolic and elliptic cases. This not only allows us to recover known kernel functions for these system \cite{RK,HR,KNS,SV2} but also to find some new ones (we discuss the relation of our results to previous ones in more detail in Section~\ref{sec4}); in particular, we obtain kernel functions for a deformation of Ruijsenaar's elliptic operator, which, as far as we know, has not been previously considered in the literature. We believe that our method also is of interest due to its simplicity, and since it unifies and extends various results in the literature. Moreover, it has heuristic value: For example, while various kernel functions for $BC_N$- and Toda-like relativistic systems are already known \cite{HR,KNS}, our results suggest that the lists of known such kernel functions are incomplete, and a systematic method to complete these lists.

To introduce our notation we recall the definition of the Ruijsenaars model: it is (formally) defined by analytic difference operators $S^+_N(\vx;g,\beta)$ and $S^-_N(\vx;g,\beta) $ of the form (our normalization is to simplify formulas further below)
\begin{equation} 
\begin{split} 
\label{Spm} 
S^\pm_N(\vx;g,\beta) = &\frac{s(\ii g\beta)}{\ii g\beta s'(0)}\sum_{j=1}^N \left( \prod_{k\neq j} \left(\frac{s(x_j-x_k\mp\ii g \beta)}{s(x_j-x_k)} \right)^{1/2} \right)
\\ & \times \exp\left(\mp\ii\beta\frac{\partial}{\partial x_j}\right)
\left( \prod_{k\neq j}  \left( \frac{s(x_j-x_k\pm\ii g \beta)}{s(x_j-x_k)} \right)^{1/2}  \right) 
\end{split} 
\end{equation} 
with  $g>0$ the coupling parameter and $\beta>0$ the relativistic deformation parameter (i.e.\ the non-relativistic limit corresponds to $\beta\to 0$); the different cases correspond to 
\begin{equation} 
\label{s} 
s(x) = \begin{cases}  x & \text{ (rational case I)} \\ (1/r)\sin(rx) & \text{ (trigonometric case II)} \\  (a/\pi)\sinh(\pi x/a) & \text{ (hyperbolic case III)} \\ \sigma(x;\pi/2r,\ii a/2)\exp(-\eta_1 r x^2/\pi) & \text{ (elliptic case IV)} \end{cases} 
\end{equation} 
with further parameters $r>0$, $a>0$ ($\sigma(x;\pi/2r,\ii a/2)$ is the Weierstrass elliptic sigma function with periods $2\omega_1=\pi/r$ and $2\omega_2=\ii a$; we set "Planck's constant" $\hbar=1$ since this parameter can be easily introduced in all our formulas by scaling
\begin{equation} 
(g,\beta)\to (g/\hbar, \hbar\beta) , 
\end{equation} 
but otherwise we use the same parameter symbols as Ruijsenaars \cite{R,RGamma}; our conventions for special functions follow \cite{WW}; throughout the paper we use boldface symbols for vectors, e.g., $\vx$ above is short for $(x_1,\ldots,x_N)$; $s'(x)=ds(x)/dx$). Note that, in the elliptic case, $s(x)$ is proportional to the Jacobi theta function $\vartheta_1(rx ;\exp(-ar))$; see \Ref{sell}. While the elliptic case obviously is most general (the others are limiting cases), it is convenient for us to consider the different cases separately; for example,  results for the elliptic cases are true only under certain restrictions (the {\em balancing conditions} given below), whereas they hold true without such restrictions in the other cases. 

We note two important requirements in our approach: $(i)$ We are only interested in {\em common} kernel functions of two pairs of operators $(S^+_N(\vx),S_M^+(-\vy))$ and $(S^-_N(\vx),S_M^-(-\vy))$ (etc.), (ii) we insist on having uniform arguments and proofs for the different cases. While previous works on relativistic kernel functions in \cite{RK,HR} and \cite{KNS} insist on requirement $(i)$ and $(ii)$, respectively, ours is different in that we insist on both requirements at the same time. We note that requirement $(i)$ is necessary for kernel functions which generalize known non-relativistic ones (as explained in \cite{R}), whereas $(ii)$ is natural from a mathematical point of view. 

We now present our main result. The source identities provide a common eigenfunction for the following generalization of the analytic difference operators given by \Ref{Spm}--\Ref{s},
\begin{equation} 
\label{cSpm}
\begin{split} 
\cS^\pm_{\cN}(\vX;\vm)  = &\sum_{J=1}^{\cN} \frac{s(\ii g \beta m_J)}{\ii g\beta s'(0)}\left( \prod_{K\neq J}f_\mp(X_J-X_K;m_J,m_K) \right)
\\  & \times \exp\left(\mp\ii\frac{\beta}{m_J}\frac{\partial}{\partial x_J}\right)
\left( \prod_{K\neq J}f_\pm(X_J-X_K;m_J,m_K) \right)\\ 
\end{split} 
\end{equation} 
\begin{equation} 
\label{fpm} 
f_\pm(x;m,m')  = \begin{dcases} \left(\frac{s(x\pm \ii g\beta(m+m')/2)}{s(x\pm \ii g\beta(m-m')/2)}\right)^{1/2} & \text{ if $m'=m$ or $m'=-\frac1{gm}$} \\ 1& \text{ if $m'=-m$ or $m'=\frac1{gm}$} \end{dcases}
\end{equation} 
depending on parameters $m_J$ taking values in the set 
\begin{equation} 
\label{Lambda} 
\Lambda = \left\{ m_0, -m_0, -\frac1{gm_0}, \frac1{gm_0}\right\} 
\end{equation} 
for some fixed non-zero parameter $m_0$. While we sometimes set $m_0=1$, we find it useful to introduce this parameter to expose the following symmetries in our construction, 
\begin{equation}
\label{symm}  
m_0\to -m_0,\quad m_0\to \frac{1}{gm_0}
\end{equation} 
(note that $\Lambda$ is invariant under these transformations). The pertinent eigenfunction is given by 
\begin{equation} 
\label{Phi0}
\Phi(\vX;\vm) = \prod_{1\leq J<K\leq \cN} \phi(X_J-X_K;m_J,m_K) 
\end{equation} 
\begin{equation}
\label{phi}  
\phi(x;m,m') = \begin{dcases} \left(\frac{G(x+\ii g\beta m -\ii\beta/2m;\beta/m)G(x+\ii\beta/2m;\beta/m)}{G(x-\ii g\beta m +\ii\beta/2m;\beta/m)G(x-\ii\beta/2m;\beta/m)}\right)^{1/2} & \text{ if $m'=m$}\\ 
\frac{G(x-\ii g\beta m/2;\beta/m)}{G(x+\ii g\beta m/2;\beta/m)} & \text{ if $m'=-m$} \\ 
s(x) & \text{ if $m'=\frac1{gm}$}  \\ 
\frac1{\left(s(x+\ii g \beta m_0/2-\ii\beta/2m_0) s(x-\ii g \beta m_0/2+\ii\beta/2m_0)\right)^{1/2}} & \text{ if $m'=-\frac{1}{gm}$}  
\end{dcases} 
\end{equation} 
with $G(x;\alpha)$ a function of two suitably restricted complex variables $x$ and $\alpha$ satisfying a functional equation of the following form,
\begin{equation} 
\label{G1}
\frac{G(x+\ii\alpha/2;\alpha)}{G(x-\ii\alpha/2;\alpha)} = c\,  s(x),\quad c\in\C\setminus\{0\} 
\end{equation} 
(these restrictions include $\Re(\alpha)\neq 0$ and are discussed in more detail in Appendix~\ref{appA}). We note that the constant $c$ in \Ref{G1}, which can depend on $\alpha$, is not important for our purposes (since it can by changed to another constant $c'$ by multiplying $G(x;\alpha)$ with $(c'/c)^{x/\ii\alpha}$). Note that, while we need $G(x;\alpha)$ only for {\em real} $\alpha\neq 0$, we sometimes allow $\alpha$ to be complex for generality.

\begin{theorem}[Relativistic source identities]\label{Theorem1} 
Let $\cN\in\N$, $\vX=(X_1,\ldots,X_{\cN})$ with $X_J$ complex variables such that $X_J\neq X_K$ $\forall J\neq K$, and $\vm=(m_1,\ldots,m_\cN)\in\Lambda^\cN$. Then
\begin{equation} 
\label{SI} 
\left( \cS^\pm_{\cN}(\vX;\vm)- \frac{s(\ii g \beta \sum_{J=1}^\cN m_J)}{\ii g\beta s'(0)}\right) \Phi(\vX;\vm)=0 
\end{equation} 
holds true in the rational, trigonometric, and hyperbolic cases; in the elliptic case, \Ref{SI} holds true if and only if $\vm$ satisfies balancing condition 
\begin{equation} 
\label{BC} 
\sum_{J=1}^\cN m_J =0 . \quad \mathrm{(IV)} 
\end{equation} 
\end{theorem}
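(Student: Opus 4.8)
The plan is to prove the source identity \Ref{SI} by induction on $\cN$, following the standard strategy for source identities but adapting it to the analytic-difference-operator setting. The base case $\cN=1$ is immediate: the products over $K\neq J$ are empty, $\Phi\equiv 1$, and both sides reduce to the scalar $s(\ii g\beta m_1)/(\ii g\beta s'(0))$. For the inductive step I would substitute the factorized ansatz \Ref{Phi0} into $\cS^\pm_\cN(\vX;\vm)\Phi(\vX;\vm)$, act with the shift $\exp(\mp\ii(\beta/m_J)\partial_{x_J})$ on the relevant factors of $\Phi$, and collect everything into a single sum $\sum_{J=1}^\cN A_J(\vX;\vm)\,\Phi(\vX;\vm)$, where $A_J$ is a product over $K\neq J$ of ratios of $s$-functions built from $f_\pm$ and from the shift acting on $\phi(X_J-X_K;m_J,m_K)$. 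The key algebraic point is the functional equation \Ref{G1}: for each pair $(m_J,m_K)$ the shift of $\phi$ combines with the $f_\pm$ prefactors so that the $G$-functions collapse and $A_J$ becomes an explicit rational expression in $s(X_J-X_K\pm\ii g\beta(\cdot))$. One then must verify the pointwise identity
\begin{equation}
\label{planeq}
\sum_{J=1}^\cN A_J(\vX;\vm) = \frac{s(\ii g\beta\sum_J m_J)}{\ii g\beta s'(0)}.
\end{equation}

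To establish \Ref{planeq} I would use the inductive hypothesis in the following way: the left side is a sum of $\cN$ terms each of which, upon freezing one variable (say sending $X_\cN\to$ a special value, or rather by extracting the $K=\cN$ factors), relates to the $(\cN-1)$-variable version of the same sum. More precisely, \Ref{planeq} is itself a \emph{trigonometric-type identity} — a partial-fraction-like relation among $s$-functions — that can be proved directly by examining it as a function of one variable, say $X_1$: one checks that the difference of the two sides is (in the rational, trigonometric, hyperbolic cases) an entire, periodic, bounded function of $X_1$, hence constant, and then evaluates the constant by a limit or residue computation, reducing to the $\cN-1$ case. The symmetries \Ref{symm} and the structure of $\Lambda$ mean only finitely many "types" of pair $(m_J,m_K)$ occur, so the case analysis from \Ref{fpm} and \Ref{phi} is finite; in fact the four values in $\Lambda$ organize into two "species" (those $\propto m_0$ and those $\propto 1/gm_0$), and the cross-terms between species have $f_\pm=1$, which simplifies $A_J$ considerably.

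The elliptic case requires extra care and is where the balancing condition \Ref{BC} enters. Here $s(x)=\sigma(x;\omega_1,\omega_2)\exp(-\eta_1 rx^2/\pi)$ is no longer periodic but only quasi-periodic, so the function-theoretic argument for \Ref{planeq} produces, instead of an entire bounded function, an expression whose quasi-periodicity factors must cancel; tracking the exponential/quasi-period multipliers one finds they cancel precisely when $\sum_J m_J=0$, and conversely a nonzero total forces a genuine obstruction (the putative constant picks up an $X$-dependent exponential). Equivalently, $G(x;\alpha)$ in the elliptic case is a (Ruijsenaars) elliptic gamma-type function whose functional equation \Ref{G1} holds only up to controlled exponential corrections, and these telescope in $\Phi$ only under \Ref{BC}. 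The main obstacle I anticipate is precisely this bookkeeping: organizing the proof of \Ref{planeq} uniformly across the four cases of \Ref{phi}/\Ref{fpm} and the four values in $\Lambda$, while simultaneously keeping track — in the elliptic case — of the quasi-periodicity/exponential prefactors so that the "if and only if" in the balancing condition comes out cleanly. I would handle this by first proving the rational/trigonometric/hyperbolic cases with the bounded-entire-function argument, then treating the elliptic case as a deformation in which every step acquires an explicit multiplier, and showing the product of all multipliers equals $\exp(\text{const}\cdot\sum_J m_J\cdot(\text{linear in }\vX))$, which is trivial exactly under \Ref{BC}.
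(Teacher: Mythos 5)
Your sketch follows the paper's own strategy: compute $\cS^\pm_\cN(\vX;\vm)\Phi(\vX;\vm)/\Phi(\vX;\vm)$, use the functional equation \Ref{G1} to collapse the $G$-functions so that each summand becomes a ratio of $s$-functions, reduce the source identity to a partial-fraction-type relation among $s$-functions, and prove that relation by a residue/Liouville argument in which the elliptic quasi-periodicity forces the balancing condition. The paper packages the last two steps as Lemma~\ref{Lemma1} and proves it in Appendix~\ref{appB} by exactly the residue-counting-plus-constant-evaluation method you describe (the constant is computed by sending the $Z_J$ iteratively to infinity rather than by a formal induction on $\cN$, but that is a cosmetic difference).

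Where your sketch has a genuine gap is in the passage from ``$A_J$ is an explicit rational expression in $s$'' to ``the resulting identity $\sum_J A_J = \mathrm{const}$ is a partial-fraction-type relation provable by the Liouville argument.'' After the $G$-functions collapse, the surviving $s$-factors in $A_J$ carry shifts in their arguments that depend on the pair $(m_J,m_K)$: for instance the cross-species factors acquire shifts of the form $\pm(\beta/2m + g\beta m/2)$ which are absent from the same-species factors. The poles of $A_J$, viewed as a function of $X_J$, therefore lie at $X_J = X_K - \mathrm{shift}(m_J,m_K)$, and the residue cancellation underlying the Liouville argument requires the poles of $A_J$ and of $A_K$ to coincide. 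This happens only if $\mathrm{shift}(m,m')$ is of the form $\xi(m)-\xi(m')$ for a single function $\xi$ on $\Lambda$, so that one can reparametrize $Z_J = X_J + \xi(m_J)$ and bring the entire sum into the canonical form treated by Lemma~\ref{Lemma1}. Verifying that such $\xi$ exists consistently across all pairs in $\Lambda\times\Lambda$ is the content of conditions \Ref{xi1}, \Ref{xi2}, \Ref{xi3}, whose solvability the paper establishes by exhibiting the explicit $\xi_\pm(m)$ in \Ref{xipm}; this consistency is precisely what makes the particular four-element set $\Lambda$ work. Without that step your planned bounded-entire-function argument has nothing to engage: the residues at the misaligned poles have no counterpart to cancel against, and the sum $\sum_J A_J$ will not be constant.
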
 

(The proof is given in Section~\ref{sec2}.) 

It is worth noting that, in the corresponding non-relativistic result, the $m_J$ need not be constrained to the set $\Lambda$ but can be arbitrary non-zero complex parameters; see Theorem~\ref{thmnrsource}. Fortunately, this additional constraint in the relativistic case does not affect the applications we have in mind. 

As discussed in more detail in Appendix~\ref{appA}, it is known that solutions to the functional equation in \Ref{G1} exist \cite{RGamma}. For example, in the rational case, 
\begin{equation} 
\label{Grat} 
G(x;\alpha) = \Gamma(1/2 + x/\ii\alpha)\qquad \text{(I)} 
\end{equation} 
with $\Gamma(x)$  the Euler Gamma function, is a solution of \Ref{G1} with $c=1/\ii\alpha$. In this case, 
\begin{equation} 
\label{G2} 
G(x;-\alpha) = G(-x;\alpha). 
\end{equation} 
We find it convenient to impose this condition also in the other cases since then the kernel functions discussed in Section~\ref{sec4} have a somewhat simpler form. Note that \Ref{G2} can always be imposed as follows: take any solution $G(x;\alpha)$ of \Ref{G1} restricted to $\Re(\alpha)>0$ and use \Ref{G2} to extend it to $\Re(\alpha)<0$ (this is explained in more detail in Lemma~\ref{Lemma2}). We stress that the condition in \Ref{G2} is not essential and can be easily dropped; see Remark~\ref{remG1G2}. 

The source identities imply many interesting results as special cases and, in particular, it gives kernel function identities; see Corollaries~\ref{cor1}--\ref{cor5}. The most general such kernel function identity is for pairs of deformed Ruijsenaars operators: such operators $S^\pm_{N,\tN}(\vx,\vtx;g,\beta)$ are equal to the special case $\cN=N+\tN$ ($N,\tN\in\N$) and 
\begin{equation} 
\label{mX}
(m_J,X_J) = \begin{cases} (1,x_J) & \text{ for } J=1,\ldots,N \\ (-1/g,\tx_{J-N}) & \text{ for } J-N=1,\ldots,\tN \end{cases} 
\end{equation} 
of the operators in \Ref{cSpm}, i.e., 
\begin{equation}
\label{SNM} 
\begin{split} 
S^\pm_{N,\tN}(\vx,\vtx;g,\beta) = \sum_{j=1}^N \frac{s(\ii g\beta)}{\ii g\beta s'(0)}A^\mp_j \ee{\mp\ii\beta\frac{\partial}{\partial x_j} }A^\pm_j - \sum_{k=1}^{\tN} \frac{s(\ii \beta)}{\ii g\beta s'(0)}B^\mp_k \ee{\pm\ii g \beta\frac{\partial}{\partial \tx_k} }B^\pm_k  \\ 
A_j^\pm \equiv \left(\prod_{1\leq j'\leq N;j'\neq j} \left(\frac{s(x_j-x_{j'}\pm \ii g\beta)}{s(x_j-x_{j'})}\right)^{1/2} \right) \left( \prod_{k=1}^{\tN}\left(\frac{s(x_j-\tx_k\pm \ii g\beta/2\mp \ii\beta/2)}{s(x_j-\tx_k \pm \ii g\beta/2 \pm \ii\beta/2)}\right)^{1/2} \right)\\ 
B_k^\pm \equiv \left(\prod_{1\leq k'\leq \tN;k'\neq k} \left(\frac{s(\tx_k-\tx_{k'}\mp \ii \beta)}{s(\tx_k-\tx_{k'})}\right)^{1/2} \right) \left( \prod_{j=1}^{N}\left(\frac{s(\tx_k-x_j \mp \ii\beta/2\pm \ii g\beta/2)}{s(\tx_k-x_j \mp \ii\beta/2\mp\ii g\beta/2)}\right)^{1/2} \right), 
\end{split} 
\end{equation} 
and a common kernel functions for two pairs of such operators is given in Corollary~\ref{cor5}. 

The plan of this paper is as follows. Section~\ref{sec2} contains a proof of Theorem~\ref{Theorem1}. In Section~\ref{sec3} we state interesting special cases of Theorem~\ref{Theorem1}, including the kernel function identities. Section~\ref{sec4} contains a comparison with previous results in the literature, in particular, results in the non-relativistic limit (Section~\ref{sec4.1}) and kernel function identities obtained in \cite{RK,HR,KNS,SV2} (Section~\ref{sec4.2}).  We conclude with some remarks in Section~\ref{sec6}. A concise review on Gamma functions, i.e.\ functions satisfying the relations in \Ref{G1}, and a proof of a key lemma, are deferred to Appendices~\ref{appA} and \ref{appB}, respectively.

\section{Proof of main result}
\label{sec2} 
We start by stating three preliminary results needed in our proof of the source identities. 

A key result we need is the following:

\begin{lemma} 
\label{Lemma1} 
Let $\cN\in\N$, $\gamma\in\C$ such that $\Im(\gamma)\neq 0$, $\vZ\in \C^\cN$ such that $Z_J\neq Z_K$ $\forall J\neq K$, and $m_J\in\R\setminus\{0\}$ for $J=1,2,\ldots,\cN$. Then 
\begin{equation} 
\label{WH} 
\sum_{J=1}^\cN s(\gamma m_J) \prod_{K\neq J}\frac{s(Z_J-Z_K+\gamma m_K)}{s(Z_J-Z_K)} = s\left(\gamma\sum_{J=1}^\cN m_J\right) 
\end{equation} 
holds true in the rational, trigonometric, and hyperbolic cases; in the elliptic case, \Ref{WH} holds true if and only if the $m_J$ satisfy the balancing condition in \Ref{BC}.
\end{lemma}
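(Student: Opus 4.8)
The plan is to prove the identity by establishing it first as an identity of elliptic functions (in the most general, elliptic case) and then specializing. Fix $\gamma$ and $\vm$, and view the left-hand side of \Ref{WH} as a function of $Z_1$ (say), with $Z_2,\ldots,Z_\cN$ and $\gamma$ held generic. First I would check that this function is elliptic in $Z_1$ with the periods $2\omega_1=\pi/r$ and $2\omega_2=\ii a$: this follows from the quasi-periodicity of the Weierstrass sigma function, since in each term of the sum the factors $s(Z_1-Z_K+\gamma m_K)/s(Z_1-Z_K)$ contribute quasi-periodicity factors whose product, under the balancing condition $\sum_K m_K=0$, is exactly $1$ (the exponential prefactors $\exp(-\eta_1 r x^2/\pi)$ in the elliptic $s$ combine, using $\sum_J m_J = 0$, to cancel). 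This is where the balancing condition enters, and without it the summand picks up a genuine quasi-period and the identity fails. The right-hand side $s(\gamma\sum_J m_J)=s(0)=0$ in the balanced elliptic case, consistent with what must be proved.

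Next I would examine the poles. A priori the $J$-th summand has simple poles where $Z_1=Z_K$ for $K\neq 1$ (from $1/s(Z_1-Z_K)$) and the $K$-th summand ($K\neq 1$) has a simple pole at $Z_1=Z_K$ as well (from $1/s(Z_K-Z_1)$ in the product over the other indices). The key step is to compute the residue of the full sum at $Z_1=Z_K$ and show it vanishes: only two terms contribute, the $J=1$ term and the $J=K$ term, and a short computation using $s(-x)=-s(x)$ shows the two residues cancel. Hence the left-hand side, as a function of $Z_1$, is an elliptic function with no poles, therefore constant in $Z_1$. Iterating (or by symmetry) it is constant in every $Z_J$. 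To evaluate the constant I would take a convenient limit, e.g. send all $Z_J-Z_K\to 0$ in a controlled way, or better, use the known degenerate identity: letting the differences become large (in the hyperbolic/trigonometric realization) or using a direct induction on $\cN$ with the residue argument reduces the constant to the $\cN=1$ case, where \Ref{WH} reads $s(\gamma m_1)=s(\gamma m_1)$.

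Alternatively, and perhaps more cleanly, I would run an induction on $\cN$: singling out the pole structure in $Z_\cN$ and matching residues expresses the $\cN$-term sum in terms of the $(\cN-1)$-term sum with $m_{\cN-1}$ replaced by $m_{\cN-1}+m_\cN$ (this is the standard ``addition'' mechanism behind such identities), and the base case is trivial. For the rational, trigonometric, and hyperbolic cases the same argument goes through verbatim, except that there is no balancing constraint because $s$ is either a polynomial or a single exponential-type function with no genuine second period, so the ``elliptic function with no poles'' step is replaced by ``entire (or meromorphic with controlled growth) function with no poles'' plus a growth estimate at infinity forcing it to be constant.

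The main obstacle I anticipate is the bookkeeping in the residue cancellation when $\cN$ is large and the pole at $Z_1=Z_K$ receives contributions from both the $J=1$ and $J=K$ summands simultaneously; one must be careful that the remaining factors $\prod_{K'\neq 1,K}$ in the two terms genuinely coincide at $Z_1=Z_K$ so that the residues are negatives of one another. A secondary point requiring care is verifying, in the elliptic case, that the product of quasi-periodicity factors is exactly trivial (not merely a root of unity) under $\sum_J m_J=0$ — this uses the precise form $s(x)=\sigma(x)\exp(-\eta_1 r x^2/\pi)$ and the Legendre-type relation between $\eta_1$, $\omega_1$, and the quasi-periods, and it is the crux of the ``if and only if'' statement. (This lemma is presumably the ``key lemma'' whose full proof is deferred to Appendix~\ref{appB}.)
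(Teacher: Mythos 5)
Your plan is sound and matches the paper's approach in the rational, trigonometric, and hyperbolic cases: fix all $Z_J$ with $J\geq 2$, view the left-hand side as a meromorphic function of $Z_1$, show the residues at the poles $Z_1=Z_K$ vanish, invoke Liouville, and pin down the constant by a limit. The residue cancellation between the $J=1$ and $J=K$ summands is exactly what the paper checks.

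However, your treatment of the elliptic case contains a genuine error. You claim that under the balancing condition $\sum_J m_J=0$ the left-hand side becomes an elliptic function of $Z_1$ because ``the quasi-periodicity factors whose product \ldots is exactly $1$.'' This is false. In the $J$-th term the product runs over $K\neq J$, so the quasi-periodicity factor under $Z_1\to Z_1+\ii a$ is $\exp\bigl(-2\ii r\gamma\sum_{K\neq 1}m_K\bigr)=\exp(2\ii r\gamma m_1)$ for the $J=1$ summand, and likewise each $J\neq 1$ summand picks up $\exp(2\ii r\gamma m_1)$ from its single $Z_1$-dependent factor. So under balancing all terms acquire the \emph{same nontrivial} multiplier $\exp(2\ii r\gamma m_1)\neq 1$ (since $\Im\gamma\neq 0$, $m_1\in\R\setminus\{0\}$). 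The left-hand side is therefore only quasi-periodic, not elliptic, and your step ``elliptic with no poles $\Rightarrow$ constant'' does not apply.

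This gap matters for both directions of the ``if and only if.'' The paper's actual argument is different and avoids the ellipticity claim: it checks the residues at all translated poles $Z_1=Z_2+n\pi/r+n'\ii a$ (not just the fundamental ones), and precisely because of the quasi-periodicity factor, the residues with $n'\neq 0$ vanish if and only if $\exp(-2\ii r\gamma\sum_K m_K)=1$, which under the stated hypotheses is equivalent to $\sum_J m_J=0$. This directly delivers the ``only if.'' Once the function is pole-free, the paper then exploits the nontrivial multiplier $\exp(2\ii r n'\gamma m_1)\neq 1$ under $\ii a$-shift to conclude the constant must be zero, consistent with the right-hand side $s(0)=0$. Your proposed evaluation of the constant by limits or by reduction to $\cN=1$ also does not carry over to the elliptic case: taking $Z_J\to\infty$ is meaningless here, and the $\cN=1$ base case is incompatible with balancing since $m_1\neq 0$. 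You should replace the ellipticity step with the residue computation at the shifted poles, and determine the constant from the quasi-periodicity factor rather than from a limit.
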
 
 
This result is not new (see e.g.\  \cite{KNS}, Lemma~3.1), but for the convenience of the reader we include a proof in Appendix~\ref{appB}.
 
 We need another (simple) result: 
  
 \begin{lemma}
 \label{Lemma2} 
 For $x$ a complex variable and $A$, $\alpha$ two complex parameters with non-zero real parts, the two analytical difference equations 
 \begin{equation} 
 \label{Fpm} 
 \frac{F(x\mp\ii\alpha/2)}{F(x\pm\ii\alpha/2)} = \frac{s(x\mp\ii A)}{s(x\pm \ii A)}
 \end{equation} 
  have a common solution 
 \begin{equation} 
 \label{F1} 
 F(x) = \frac{G(x+\ii A;\alpha)}{G(x-\ii A;\alpha)}
 \end{equation} 
 with $G(x;\alpha)$ any function satisfying the condition in \Ref{G1} for $\Re(\alpha)>0$ and extended to $\Re(\alpha)<0$ by \Ref{G2}. 
 \end{lemma}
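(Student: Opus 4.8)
The plan is to verify directly, by substitution, that the function $F$ in \Ref{F1} satisfies both equations in \Ref{Fpm}; the whole argument is a short bookkeeping exercise with the functional equation \Ref{G1}. First I would note that the two equations in \Ref{Fpm} (upper and lower signs) are reciprocals of one another, so it suffices to treat the upper-sign case and then invert both sides.

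For the upper-sign equation, substituting \Ref{F1} gives
\[
\frac{F(x-\ii\alpha/2)}{F(x+\ii\alpha/2)} = \frac{G(x+\ii A-\ii\alpha/2;\alpha)\,G(x-\ii A+\ii\alpha/2;\alpha)}{G(x-\ii A-\ii\alpha/2;\alpha)\,G(x+\ii A+\ii\alpha/2;\alpha)} .
\]
Regrouping the four factors by the arguments $x+\ii A$ and $x-\ii A$, and applying \Ref{G1} once with shift variable $x+\ii A$ and once with $x-\ii A$, the right-hand side collapses to $\bigl(c\,s(x+\ii A)\bigr)^{-1}c\,s(x-\ii A) = s(x-\ii A)/s(x+\ii A)$, which is exactly the right-hand side of \Ref{Fpm}; note that the constant $c$ drops out, so its possible $\alpha$-dependence is immaterial.

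The one point that needs care is that \Ref{G1} is stated for $\Re(\alpha)>0$, whereas the lemma permits $\Re(\alpha)<0$, where $G(\cdot;\alpha)$ is defined only through the extension \Ref{G2}. I would therefore first record that this extension again solves \Ref{G1} (up to a sign in the constant): writing $\alpha=-\beta$ with $\Re(\beta)>0$, \Ref{G2} gives $G(x\pm\ii\alpha/2;\alpha)=G(-x\mp\ii\beta/2;\beta)$, whence \Ref{G1} for $\beta$ together with the oddness $s(-x)=-s(x)$ — valid in all four cases of \Ref{s}, the Weierstrass $\sigma$ being odd and the Gaussian factor even — yields $G(x+\ii\alpha/2;\alpha)/G(x-\ii\alpha/2;\alpha)=-c_\beta\,s(x)$. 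With this in hand the computation of the previous paragraph is uniform in the sign of $\Re(\alpha)$.

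Finally I would remark that the hypotheses $\Re(A)\neq0$ and $\Re(\alpha)\neq0$ are precisely what guarantee that all arguments of $G$ appearing above lie in the region where $G$ is defined and \Ref{G1}, \Ref{G2} are available (cf.\ Appendix~\ref{appA}), so that $F$ is a well-defined meromorphic function and the manipulations are legitimate. I do not anticipate any genuine obstacle here: the substance of the lemma is the algebraic identity above, and the only things to get right are the pairing of the four $G$-factors, the signs of the half-period shifts, and the reduction of the $\Re(\alpha)<0$ case to \Ref{G2}.
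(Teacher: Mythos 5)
Your proof is correct and follows essentially the same route as the paper: direct substitution of the ansatz \Ref{F1} into \Ref{Fpm} and two applications of \Ref{G1}, with the $\Re(\alpha)<0$ case reduced to $\Re(\alpha)>0$ via \Ref{G2}. The only difference is organizational: you observe up front that the upper- and lower-sign equations are reciprocals (so one check suffices) and verify explicitly that the \Ref{G2}-extension still satisfies \Ref{G1} up to the sign of $c$, whereas the paper introduces a second function $F_2(x)=G(-x-\ii A;-\alpha)/G(-x+\ii A;-\alpha)$ and checks it separately; these are equivalent bookkeeping choices, not different arguments.
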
 
\begin{proof} 
We first consider the lower equation in \Ref{Fpm}, i.e.\ the equation obtained by setting $\mp$ to $+$ and $\pm$ to $-$. If $G(x;\alpha)$ solves \Ref{G1} for all $\alpha$ such that $\Re(\alpha)\neq 0$, then obviously $F_1(x)\equiv G(x+\ii A;\alpha)/G(x-\ii A;\alpha)$ solves the lower equation in \Ref{Fpm}; that $F_1(x)$ also solves the upper equation in \Ref{Fpm} is checked by a simple computation. In a similar manner one checks that $F_2(x)\equiv G(-x-\ii A;-\alpha)/G(-x+\ii A;-\alpha)$ is a common solution of both equations in \Ref{Fpm}. Thus $F(x)$, which is equal to $F_1(x)$ for $\Re(\alpha)>0$ and equal to $F_2(x)$ for $\Re(\alpha)<0$, is also a common solution to both equations.
\end{proof} 

We now are ready to prove the relativistic source identities.

\begin{proof}[Proof of Theorem~\ref{Theorem1}]
We construct a common eigenfunction $\Phi(\vX;\vm)$ of both operators in \Ref{cSpm}. For that we make the ansatz in \Ref{Phi0}, compute
\begin{equation} 
\label{cc} 
\begin{split} 
\cS^\pm_{\cN}(\vX;\vm)\Phi(\vX;\vm) = \sum_{J=1}^{\cN} \frac{s(\ii g\beta m_J)}{\ii g\beta s'(0) } \Biggl(\prod_{K\neq J}f_\mp(X_J-X_K;m_J,m_K)\\ \times f_{\pm}(X_J-X_K\mp\ii\beta/m_J;m_J,m_K)\Biggr) \Biggl(\prod_{K>J}\frac{\phi(X_J-X_K\mp\ii\beta/m_J;m_J,m_K)}{\phi(X_J-X_K;m_J,m_K)}\Biggr)\\ \times  
\Biggl(\prod_{K<J}\frac{\phi(X_K-X_J\pm\ii\beta/m_J;m_K,m_J)}{\phi(X_K-X_J;m_K,m_J)}\Biggr)\Phi(\vX;\vm), 
\end{split} 
\end{equation} 
and determine the functions $\phi(X;m_J,m_K)$ such that this is identical with 
\begin{equation} 
\label{cc1}
\sum_{J=1}^{\cN}  \frac{s(\gamma_\pm m_J)}{\gamma_\pm s'(0) } \left( \prod_{K\neq J}\frac{s(X_J-X_K+\xi_\pm(m_J)-\xi_\pm(m_K) +\gamma_\pm m_K)}{s(X_J-X_K+\xi_\pm(m_J)-\xi_\pm(m_K))}\right)\Phi(\vX;\vm)
\end{equation} 
for some functions $\xi_\pm(m)$ and some constants $\gamma_\pm$ to be found. Then Lemma~\ref{Lemma1} (with $Z_J=X_J+\xi_\pm(m_J)$ and $\gamma=\gamma_\pm$) implies the result.  (Note that $\gamma_\pm$ must be equal to $\ii g\beta$, up to a sign, but the correct choice is not clear at this point.) 

Comparing \Ref{cc} and \Ref{cc1} we obtain the conditions 
\refstepcounter{equation}
\begin{equation} 
\label{c1} 
\begin{split} 
f_\mp(X\pm\ii\beta/2m;m,m')f_{\pm}(X\mp\ii\beta/2m;m,m')\frac{\phi(X\mp\ii\beta/2m;m,m')}{\phi(X\pm\ii\beta/2m;m,m')} \\ = \frac{s(X\pm\ii\beta/2m+\xi_\pm(m)-\xi_\pm(m') +\gamma_\pm m')}{s(X\pm\ii\beta/2m+\xi_\pm(m)-\xi_\pm(m'))} 
\end{split} \tag{\theequation{}a}  
\end{equation} 
for $K>J$ (we renamed $X_J-X_K\mp\ii\beta/2m_J\equiv X$ and $(m_J,m_K)=(m,m')$) and 
\begin{equation} 
\label{c2} 
\begin{split} 
f_\mp(-X\pm\ii\beta/2m';m',m)f_{\pm}(-X\mp\ii\beta/2m';m',m)\frac{\phi(X\pm\ii\beta/2m';m,m')}{\phi(X\mp\ii\beta/2m';m,m')} \\ 
= \frac{s(X\mp\ii\beta/2m'-\xi_\pm(m')+\xi_\pm(m) -\gamma_\pm m)}{s(X\mp\ii\beta/2m'-\xi_\pm(m')+\xi_\pm(m))}  
\end{split} \tag{\theequation{}b} 
\end{equation} 
for $K<J$  (we renamed $X_K-X_J\pm\ii\beta/2m_J\equiv X$ and $(m_J,m_K)\equiv (m',m)$ and used $s(-x)=-s(x)$). We thus get four conditions for each function $\phi(X;m,m')$. 

We now consider these conditions for the different cases $(m,m')=(m,m)$, $(m,-m)$, $(m,1/gm)$, $(m,-1/gm)$, for all $m\in\Lambda$. 

For $(m,m')=(m,m)$ we get from \Ref{c1}
\refstepcounter{equation}
\begin{equation} 
\label{c11} 
\begin{split} 
\frac{\phi(X\mp\ii\beta/2m;m,m)}{\phi(X\pm\ii\beta/2m;m,m)}  = 
\frac{s(X\pm\ii\beta/2m+\gamma_\pm m)}{s(X\pm\ii\beta/2m)}\\ \times \left(\frac{s(X\pm\ii\beta/2m)s(X\mp\ii\beta/2m)}{s(X\pm\ii\beta/2m\mp\ii g\beta m)s(X\mp\ii\beta/2m\pm\ii g\beta m)}\right)^{1/2}
\end{split}  \tag{\theequation{}a} 
\end{equation} 
and from \Ref{c2} 
\begin{equation} 
\label{c21} 
\begin{split} 
\frac{\phi(X\pm\ii\beta/2m;m,m)}{\phi(X\mp\ii\beta/2m;m,m)} 
= \frac{s(X\mp\ii\beta/2m-\gamma_\pm m)}{s(X\mp\ii\beta/2m)}\\ \times \left(\frac{s(X\mp\ii\beta/2m)s(X\pm\ii\beta/2m)}{s(X\mp\ii\beta/2m\pm\ii g\beta m)s(X\pm\ii\beta/2m\mp\ii g\beta m)}\right)^{1/2}
\end{split}  \tag{\theequation{}b} 
\end{equation} 
(we inserted \Ref{fpm} and used $s(-x)=-s(x)$). The choice 
\begin{equation} 
\label{apm} 
\gamma_\pm = \mp \ii g\beta 
\end{equation} 
reduces these four conditions to two, i.e., the upper/lower equation in \Ref{c11} becomes 
\begin{equation} 
\label{phi1}
\begin{split} 
\frac{\phi(X\mp\ii\beta/2m;m,m)}{\phi(X\pm\ii\beta/2m;m,m)}  = \left(\frac{s(X\mp\ii\beta/2m)s(X\mp\ii g\beta m\pm\ii\beta/2m)}{s(X\pm\ii\beta/2m)s(X\pm\ii g\beta m\mp\ii\beta/2m)}\right)^{1/2}
\end{split} 
\end{equation} 
equal to the lower/upper relation in \Ref{c21}. According to Lemma~\ref{Lemma2}, the equations in \Ref{phi1} have the solution $\phi(X;m,m)$ in \Ref{phi}. 

For $(m,m')=(m,-m)$ and with \Ref{fpm} and \Ref{apm} the four conditions in \Ref{c1} and \Ref{c2} reduce to the following two 
\begin{equation} 
\label{c12A} 
\frac{\phi(X\mp\ii\beta/2m;m,-m)}{\phi(X\pm\ii\beta/2m;m,-m)} = \frac{s(X\pm\ii\beta/2m+\xi_\pm(m)-\xi_\pm(-m) \pm \ii g\beta m)}{s(X\pm\ii\beta/2m+\xi_\pm(m)-\xi_\pm(-m))} 
\end{equation} 
(i.e.\ the two pairs of conditions are identical). For 
\begin{equation} 
\label{xi1} 
\xi_\pm(m)-\xi_\pm(-m)= \mp\ii\beta/2m\mp \ii g\beta m/2 \qquad \quad \forall m\in\Lambda 
\end{equation} 
the latter are identical with 
\begin{equation} 
\label{c12B} 
\frac{\phi(X\mp\ii\beta/2m;m,-m)}{\phi(X\pm\ii\beta/2m;m,-m)} = \frac{s(X \pm \ii g\beta m/2)}{s(X\mp \ii g\beta m/2)} . 
\end{equation} 
Thus, according to Lemma~\ref{Lemma2}, we obtain the  solution $\phi(X;m,-m)$ in \Ref{phi} provided that the condition in \Ref{xi1} is satisfied. 

For $(m,m')=(m,1/gm)$ and with \Ref{fpm} and \Ref{apm}, \Ref{c1} and \Ref{c2} become  
\refstepcounter{equation}
\begin{equation} 
\label{c41} 
\begin{split} 
\frac{\phi(X\mp\ii\beta/2m;m,1/gm)}{\phi(X\pm\ii\beta/2m;m,1/gm)} = \frac{s(X\mp\ii\beta/2m+\xi_\pm(m)-\xi_\pm(1/gm))}{s(X\pm\ii\beta/2m+\xi_\pm(m)-\xi_\pm(1/gm))} 
\end{split} \tag{\theequation{}a}  
\end{equation} 
and 
\begin{equation} 
\label{c42} 
\begin{split} 
\frac{\phi(X\pm \ii g\beta m/2;m,1/gm)}{\phi(X\mp \ii g\beta m/2;m,1/gm)}  = \frac{s(X\pm\ii g\beta m/2+\xi_\pm(m)-\xi_\pm(1/gm))}{s(X\mp \ii g\beta m/2+\xi_\pm(m)-\xi_\pm(1/gm))}  
\end{split} \tag{\theequation{}b} , 
\end{equation} 
respectively. The condition 
\begin{equation} 
\label{xi2} 
\xi_\pm(m)-\xi_\pm(1/gm)=  0 \quad \forall m\in\Lambda 
\end{equation} 
simplifies these four equations so as to allow the common solution $\phi(X;m,1/gm)$ in \Ref{phi}. 

For $(m,m')=(m,-1/gm)$ and with \Ref{fpm} and \Ref{apm}, \Ref{c1} and \Ref{c2} become 
\refstepcounter{equation}
\begin{equation} 
\label{c14} 
\begin{split} 
\frac{\phi(X\mp\ii\beta/2m;m,-1/gm)}{\phi(X\pm\ii\beta/2m;m,-1/gm)}  = \frac{s(X\pm 3\ii\beta/2+ \xi_\pm(m)-\xi_\pm(-1/gm))}{s(X\pm\ii \beta /2m+\xi_\pm(m)-\xi_\pm(-1/gm))}\\
\times \left(\frac{s(X\mp\ii g\beta m/2)s(X\pm \ii g\beta m/2)}{s(X\mp\ii g\beta m/2\pm\ii\beta/m)s(X\pm\ii g\beta m/2\mp\ii\beta/m)}\right)^{1/2}
\end{split} \tag{\theequation{}a}  
\end{equation} 
and 
\begin{equation} 
\label{c24} 
\begin{split} 
\frac{\phi(X\mp \ii g\beta m/2;m,-1/gm)}{\phi(X\pm \ii g\beta m/2;m,-1/gm)} 
= \frac{s(X\pm 3\ii g\beta m/2+\xi_\pm(m)-\xi_\pm(-1/gm))}{s(X\pm \ii g\beta m/2+\xi_\pm(m)-\xi_\pm(-1/gm))}\\
\times \left(\frac{s(X\pm\ii\beta/2m)s(X\mp \ii\beta/2m)}{s(X\pm\ii g\beta m\mp\ii\beta/2m)s(X\mp\ii g\beta m\pm\ii\beta/2m)}\right)^{1/2}, 
\end{split} \tag{\theequation{}b} 
\end{equation} 
respectively (we used $s(-x)=-s(x)$). The condition 
\begin{equation} 
\label{xi3} 
\xi_\pm(m)-\xi_\pm(-1/gm)= \mp\ii\beta/2m\mp \ii g\beta m/2 \qquad \quad \forall m\in\Lambda 
\end{equation} 
simplifies these four equations to 
\refstepcounter{equation}
\begin{equation} 
\label{c14b} 
\begin{split} 
\frac{\phi(X\mp\ii\beta/2m;m,-1/gm)}{\phi(X\pm\ii\beta/2m;m,-1/gm)}  = \left(\frac{s(X\mp\ii g\beta m/2\pm\ii\beta/m)s(X\pm \ii g\beta m/2)}{s(X\pm\ii g\beta m/2\mp\ii\beta/m)s(X\mp\ii g\beta m/2)}\right)^{1/2}
\end{split} \tag{\theequation{}a}  
\end{equation} 
and 
\begin{equation} 
\label{c24b} 
\begin{split} 
\frac{\phi(X\mp \ii g\beta m/2;m,-1/gm)}{\phi(X\pm \ii g\beta m/2;m,-1/gm)} = \left(\frac{s(X\pm\ii g\beta m\mp\ii\beta/2m)s(X\pm\ii\beta/2m)}{s(X\mp\ii g\beta m\pm\ii\beta/2m)s(X\mp \ii\beta/2m)}\right)^{1/2} 
\end{split} \tag{\theequation{}b} 
\end{equation} 
which have the common solution $\phi(X;m,-1/g m)$ in \Ref{phi}, for all $m\in\Lambda$. 

To finish the proof we are left to show that there exist functions $\xi_\pm(m)$ satisfying the conditions in \Ref{xi1}, \Ref{xi2} and \Ref{xi3}. We do this by writing down such functions:
\begin{equation} 
\label{xipm} 
\xi_\pm(m) = \mp\frac{\ii g^2\beta}{4}\left( (m_0^2+1/(m_0 g)^2 + 1/g)m-m^3\right)  
\end{equation} 
(it is trivial to check that all conditions are satisfied). 

To summarize: We showed that \Ref{cc} with $\phi(x;m,m')$ in \Ref{phi} is identical with \Ref{cc1} with $\gamma_\pm$ in \Ref{apm} and $\xi_\pm(m)$ in \Ref{xipm}, and thus Lemma~\ref{Lemma1} implies the result. 
\end{proof} 

From our proof above it is easy to also deduce the following result. 

\begin{lemma} 
\label{LemmaA} 
For $\cN\in\N$, $\beta>0$, $g$ real, $\vm\in\Lambda^\cN$ with $\Lambda$ in \Ref{Lambda} and $m_0>0$, $\cS^\pm_{\cN}(\vX;\vm)$ in \Ref{cSpm} and $\Phi(\vX;\vm)$ in \Ref{Phi0} with $s(x)$ in \Ref{s}, the analytical difference operator
\begin{equation} 
\frac{1}{\Phi(\vX;\vm)} \cS^\pm_{\cN}(\vX;\vm)\Phi(\vX;\vm) 
\end{equation} 
is identical with 
\begin{equation} 
\label{MR}
\sum_{J=1}^{\cN}  \frac{s(\ii g\beta m_J)}{\ii g\beta s'(0) } \left( \prod_{K\neq J}\frac{s(X_J-X_K \mp\ii \xi(m_J,m_K) \mp\ii g\beta m_K)}{s(X_J-X_K\mp\ii \xi(m_J,m_K))}\right) \exp\left(\mp\ii \frac{\beta}{m_J}\frac{\partial}{\partial X_J} \right)
\end{equation} 
\begin{equation} 
\label{xi} 
\xi(m,m') =  \begin{dcases} 0 & \text{ if $m'=m$ or $m'=\frac1{gm}$} \\   \beta/2m +  g\beta m/2& \text{ if $m'=-m$ or $m'=-\frac1{gm}$} . \end{dcases}
\end{equation} 
\end{lemma}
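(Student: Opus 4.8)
The plan is to read the claim off the computation already carried out in the proof of Theorem~\ref{Theorem1}, observing that it needs nothing beyond that computation and, in particular, never invokes Lemma~\ref{Lemma1} (so no balancing condition enters, even in the elliptic case). First I would record how conjugation by $\Phi$ acts on an analytic difference operator. Writing $g_J^\pm(\vX)=\prod_{K\neq J}f_\pm(X_J-X_K;m_J,m_K)$ and $c_J=s(\ii g\beta m_J)/(\ii g\beta s'(0))$, commute the shift operator in \Ref{cSpm} to the right to get $\cS^\pm_\cN(\vX;\vm)=\sum_{J=1}^\cN c_J\,g_J^\mp(\vX)\,g_J^\pm(\vX\mp\ii\tfrac{\beta}{m_J}e_J)\,\exp(\mp\ii\tfrac{\beta}{m_J}\tfrac{\partial}{\partial X_J})$, where $e_J$ denotes the $J$-th coordinate vector. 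Hence the coefficient of $\exp(\mp\ii\tfrac{\beta}{m_J}\tfrac{\partial}{\partial X_J})$ in $\Phi^{-1}\cS^\pm_\cN(\vX;\vm)\Phi$ equals $c_J\,g_J^\mp(\vX)\,g_J^\pm(\vX\mp\ii\tfrac{\beta}{m_J}e_J)\,\Phi(\vX\mp\ii\tfrac{\beta}{m_J}e_J)/\Phi(\vX)$; splitting the last ratio into its $K>J$ and $K<J$ factors via \Ref{Phi0}, this is exactly the $J$-th summand, stripped of the trailing overall factor $\Phi$, on the right-hand side of \Ref{cc}.

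Next I would invoke the core of the proof of Theorem~\ref{Theorem1}: with $\gamma_\pm=\mp\ii g\beta$ as in \Ref{apm}, the function $\phi$ of \Ref{phi}, and $\xi_\pm$ as in \Ref{xipm}, the $J$-th summand of \Ref{cc} was shown there to coincide with the $J$-th summand of \Ref{cc1}. Reinstating the shift operators, this gives
\begin{equation*}
\frac{1}{\Phi(\vX;\vm)}\cS^\pm_\cN(\vX;\vm)\Phi(\vX;\vm)=\sum_{J=1}^\cN\frac{s(\mp\ii g\beta m_J)}{\mp\ii g\beta\, s'(0)}\left(\prod_{K\neq J}\frac{s(X_J-X_K+\xi_\pm(m_J)-\xi_\pm(m_K)\mp\ii g\beta m_K)}{s(X_J-X_K+\xi_\pm(m_J)-\xi_\pm(m_K))}\right)\exp\!\left(\mp\ii\frac{\beta}{m_J}\frac{\partial}{\partial X_J}\right),
\end{equation*}
and since $s$ is odd the prefactor equals $s(\ii g\beta m_J)/(\ii g\beta s'(0))$, matching that in \Ref{MR}.

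It then remains only to identify $\xi_\pm(m)-\xi_\pm(m')$ from the four cases treated in the proof: it equals $0$ when $m'=m$ (no condition arose) or $m'=1/gm$ (by \Ref{xi2}), and $\mp\ii(\beta/2m+g\beta m/2)$ when $m'=-m$ (by \Ref{xi1}) or $m'=-1/gm$ (by \Ref{xi3}); that is, $\xi_\pm(m)-\xi_\pm(m')=\mp\ii\,\xi(m,m')$ with $\xi$ as in \Ref{xi}. Substituting this identity into the display above yields precisely \Ref{MR}. The only delicate point is the bookkeeping of shifted arguments in the conjugation step, i.e.\ keeping track of which arguments acquire a shift when the operator is written in the form $\sum_J(\text{coefficient})\exp(\mp\ii\tfrac{\beta}{m_J}\tfrac{\partial}{\partial X_J})$; but this is the same manipulation already made implicitly in passing from \Ref{cSpm} to \Ref{cc}, so no genuinely new obstacle arises.
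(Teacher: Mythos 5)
Your proposal is correct and follows essentially the same route as the paper's proof: compute the conjugated operator by commuting the shifts to the right (this is the content of \Ref{cc9}), reuse the term-by-term identification from the proof of Theorem~\ref{Theorem1} that converts \Ref{cc} into \Ref{cc1} with $\gamma_\pm=\mp\ii g\beta$ and $\xi_\pm$ as in \Ref{xipm}, and then read off $\xi_\pm(m)-\xi_\pm(m')=\mp\ii\,\xi(m,m')$ from \Ref{xi1}--\Ref{xi3}. Your explicit check that $s(\gamma_\pm m_J)/(\gamma_\pm s'(0))$ reduces to $s(\ii g\beta m_J)/(\ii g\beta s'(0))$ by oddness of $s$ is a welcome detail the paper leaves implicit.
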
 

\begin{proof} 
We compute 
\begin{equation} 
\label{cc9} 
\begin{split} 
\frac{1}{\Phi(\vX;\vm)} \cS^\pm_{\cN}(\vX;\vm)\Phi(\vX;\vm) = \sum_{J=1}^{\cN} \frac{s(\ii g\beta m_J)}{\ii g\beta s'(0) } \Biggl(\prod_{K\neq J}f_\mp(X_J-X_K;m_J,m_K)\\ \times f_{\pm}(X_J-X_K\mp\ii\beta/m_J;m_J,m_K)\Biggr) \Biggl(\prod_{K>J}\frac{\phi(X_J-X_K\mp\ii\beta/m_J;m_J,m_K)}{\phi(X_J-X_K;m_J,m_K)}\Biggr)\\ \times  \Biggl(\prod_{K<J}\frac{\phi(X_K-X_J\pm\ii\beta/m_J;m_K,m_J)}{\phi(X_K-X_J;m_K,m_J)}\Biggr) \exp\left( \mp \ii\frac{\beta}{m_J}\frac{\partial}{\partial X_J}\right) 
\end{split}
\end{equation} 
and note that our computation above implies that this is equal to 
\begin{equation} 
\sum_{J=1}^{\cN}  \frac{s(\gamma_\pm m_J)}{\gamma_\pm s'(0) } \left( \prod_{K\neq J}\frac{s(X_J-X_K+\xi_\pm(m_J)-\xi_\pm(m_K) +\gamma_\pm m_K)}{s(X_J-X_K+\xi_\pm(m_J)-\xi_\pm(m_K))}\right) \exp\left( \mp \ii\frac{\beta}{m_J}\frac{\partial}{\partial X_J}\right) . 
\end{equation} 
Insert \Ref{apm}, \Ref{xi1}, \Ref{xi2}, \Ref{xi3} and write $\xi_\pm(m)-\xi_\pm(m')$ as $\mp\ii\xi(m,m')$ to obtain the result. 
\end{proof} 

Note that Lemma~\ref{LemmaA} holds true in general in all cases (i.e.\ there is no balancing condition in the elliptic case). We state this result since it is used to derive some facts discussed in Section~\ref{sec4}. 
 
\section{Special cases} 
\label{sec3} 
We now state various interesting implications of the source identities (see Section~\ref{sec4} for a comparison with known results). For simplicity we set $m_0=1$, without loss of generality. 

Choosing $\cN=N$ ($N\in\N$) and $(m_J,X_J)=(1,x_J)$ for $J=1,\ldots,N$ one obtains an important common eigenfunction of the analytical difference operators $S^\pm_N(\vx;g,\beta)$ in \Ref{Spm}: 

\begin{corollary} 
\label{cor1} 
The function
\begin{equation} 
\label{PsiN}
\Psi_N(\vx;g,\beta) = \prod_{1\leq j<k\leq N}\left(\frac{G(x_j-x_k+\ii g\beta  -\ii\beta/2;\beta)G(x_j-x_k+\ii\beta/2;\beta)}{G(x_j-x_k-\ii g\beta  +\ii\beta/2;\beta)G(x_j-x_k-\ii\beta/2;\beta)}\right)^{1/2}
\end{equation} 
obeys 
\begin{equation} 
\left( S^\pm_N(\vx;g,\beta) - \frac{s(\ii g\beta N)}{\ii g\beta s'(0)) } \right) \Psi_N(\vx;g,\beta) =0 
\end{equation} 
for arbitrary $N$ in the rational, trigonometric and hyperbolic cases. 
\end{corollary}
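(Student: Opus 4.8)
The plan is to obtain Corollary~\ref{cor1} directly from Theorem~\ref{Theorem1} by specializing to $\cN=N$ and $(m_J,X_J)=(1,x_J)$ for $J=1,\dots,N$. Since we have set $m_0=1$, we have $1\in\Lambda$, so $\vm=(1,\dots,1)\in\Lambda^N$ is an admissible choice and the theorem applies. First I would verify that under this specialization the operator $\cS^\pm_{\cN}(\vX;\vm)$ of \Ref{cSpm} collapses to $S^\pm_N(\vx;g,\beta)$ of \Ref{Spm}. With $m_J=m_K=1$ the relevant branch of $f_\pm$ in \Ref{fpm} is the first one (the case $m'=m$), which gives $f_\pm(x;1,1)=(s(x\pm\ii g\beta)/s(x))^{1/2}$; moreover the prefactor $s(\ii g\beta m_J)/(\ii g\beta s'(0))$ becomes $s(\ii g\beta)/(\ii g\beta s'(0))$, and the shift $\exp(\mp\ii(\beta/m_J)\partial/\partial x_J)$ becomes $\exp(\mp\ii\beta\,\partial/\partial x_J)$. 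Comparing the resulting sum term by term with \Ref{Spm} shows that the two operators coincide.

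Next I would check that the eigenfunction $\Phi(\vX;\vm)$ of \Ref{Phi0} reduces to $\Psi_N(\vx;g,\beta)$ of \Ref{PsiN}. Again, with all $m_J=m_K=1$ each pair $(m,m')$ falls into the case $m'=m$ of \Ref{phi}, and with $\beta/m=\beta$ one reads off
\[
\phi(x;1,1)=\left(\frac{G(x+\ii g\beta-\ii\beta/2;\beta)\,G(x+\ii\beta/2;\beta)}{G(x-\ii g\beta+\ii\beta/2;\beta)\,G(x-\ii\beta/2;\beta)}\right)^{1/2},
\]
so that $\Phi(\vX;(1,\dots,1))=\prod_{1\le j<k\le N}\phi(x_j-x_k;1,1)=\Psi_N(\vx;g,\beta)$; here $G(\,\cdot\,;\beta)$ is well defined because $\Re(\beta)=\beta>0$ meets the restriction on the second argument of $G$ in \Ref{G1} (existence of such a $G$ is recorded in the discussion following \Ref{G1} and in Appendix~\ref{appA}). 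Finally, the eigenvalue on the right-hand side of \Ref{SI} specializes to $s(\ii g\beta\sum_{J=1}^N 1)/(\ii g\beta s'(0))=s(\ii g\beta N)/(\ii g\beta s'(0))$, which matches the constant in the statement of the corollary.

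With these identifications in hand, the rational, trigonometric and hyperbolic parts of Theorem~\ref{Theorem1} apply verbatim and yield the assertion for arbitrary $N$. The balancing condition \Ref{BC} plays no role here, since $\sum_J m_J=N\neq 0$ and no claim is made in the elliptic case. I do not expect any genuine obstacle: the entire argument is a specialization of Theorem~\ref{Theorem1}, and the only points requiring care are (i) selecting the correct branches of the piecewise-defined functions $f_\pm$ and $\phi$, (ii) tracking the normalization constants in \Ref{cSpm} and \Ref{Spm}, and (iii) confirming admissibility of $\vm$ and well-definedness of $G(\,\cdot\,;\beta)$.
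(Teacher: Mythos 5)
Your proposal is correct and matches the paper's approach exactly: the paper obtains Corollary~\ref{cor1} precisely by specializing Theorem~\ref{Theorem1} to $\cN=N$, $m_0=1$, and $(m_J,X_J)=(1,x_J)$, whereupon $\cS^\pm_{\cN}$ reduces to $S^\pm_N$, $\Phi$ reduces to $\Psi_N$, and the eigenvalue becomes $s(\ii g\beta N)/(\ii g\beta s'(0))$. Your checks of the branches of $f_\pm$ and $\phi$, the normalization, and the admissibility of $\vm$ are exactly the points that need verifying, and you handle them correctly.
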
 

Note that this  result is {\em not} true in the elliptic case (since the balancing condition in this case is $N=0$).

Choosing $\cN=N+M$ ($N,M\in\N$) and 
\begin{equation*} 
(m_J,X_J) = \begin{cases} (1,x_J) & \text{ for } J=1,\ldots,N \\ (-1,y_{J-N}) & \text{ for } J-N=1,\ldots,M \end{cases} 
\end{equation*} 
we obtain common kernel functions for the two pairs of analytical difference operators $(S^\pm_N(\vx;g,\beta),S^\pm_M(-\vy;g,\beta))$ (we also use \Ref{G2} for simplicity): 

\begin{corollary} 
\label{cor2} 
The function
\begin{equation} 
\label{FNM}
F_{N,M}(\vx,\vy;g,\beta) = \Psi_N(\vx;g,\beta)\Psi_M(-\vy;g,\beta) \prod_{j=1}^N\prod_{k=1}^M \frac{G(x_j-y_k-\ii g\beta /2;\beta)}{G(x_j-y_k+\ii g\beta /2;\beta)} 
\end{equation} 
obeys 
\begin{equation} 
\label{SI1}
\left( S^\pm_N(\vx;g,\beta) - S^\pm_M(-\vy;g,\beta) - \frac{s(\ii g\beta (N-M))}{\ii g\beta s'(0) } \right)  F_{N,M}(\vx,\vy;g,\beta) =0
\end{equation} 
for arbitrary $N,M$ in the rational, trigonometric and hyperbolic cases, and for 
\begin{equation} 
N-M=0 \quad \mathrm{(IV)} 
\end{equation} 
in the elliptic case. 
\end{corollary}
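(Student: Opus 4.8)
We plan to obtain Corollary~\ref{cor2} as a direct specialization of Theorem~\ref{Theorem1}. Fixing $m_0=1$ so that $\Lambda=\{1,-1,-1/g,1/g\}$, I would take $\cN=N+M$ and the choice $(m_J,X_J)=(1,x_J)$ for $J=1,\dots,N$ and $(m_{N+k},X_{N+k})=(-1,y_k)$ for $k=1,\dots,M$ (valid since $\pm1\in\Lambda$); the variables $x_j$, $y_k$ are taken pairwise distinct, as required in Theorem~\ref{Theorem1}. Then $\sum_{J=1}^{\cN}m_J=N-M$, so the constant in \Ref{SI} is $s(\ii g\beta(N-M))/(\ii g\beta s'(0))$ and the balancing condition \Ref{BC} reads $N-M=0$, matching the statement. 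Applying \Ref{SI} with both the upper and the lower sign, it then remains only to identify, under this substitution, the operator $\cS^\pm_{\cN}(\vX;\vm)$ with $S^\pm_N(\vx;g,\beta)-S^\pm_M(-\vy;g,\beta)$ and the function $\Phi(\vX;\vm)$ with $F_{N,M}(\vx,\vy;g,\beta)$.

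For the operator I would split the sum in \Ref{cSpm} into the terms $J\le N$ and $J>N$. In a term with $m_J=1$, every factor $f_\pm(X_J-X_{N+k};1,-1)$ involving a $y$-variable equals $1$ by the $m'=-m$ case of \Ref{fpm}, the factors involving the other $x$-variables are $f_\pm(x_j-x_{j'};1,1)=(s(x_j-x_{j'}\pm\ii g\beta)/s(x_j-x_{j'}))^{1/2}$, and the shift is $\exp(\mp\ii\beta\,\partial/\partial x_j)$; comparing with \Ref{Spm}, these terms sum to $S^\pm_N(\vx;g,\beta)$. In a term with $m_{N+k}=-1$, the factors involving $x$-variables are again $1$, one has $s(\ii g\beta m_{N+k})=-s(\ii g\beta)$, the shift is $\exp(\pm\ii\beta\,\partial/\partial y_k)$, and the remaining factors are $f_\pm(y_k-y_{k'};-1,-1)=(s(y_k-y_{k'}\mp\ii g\beta)/s(y_k-y_{k'}))^{1/2}$; putting $z_k=-y_k$ in \Ref{Spm} and using $s(-x)=-s(x)$ identifies the sum of these terms with $-S^\pm_M(-\vy;g,\beta)$.

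For the eigenfunction I would split the product in \Ref{Phi0} into the three blocks $1\le j<k\le N$, $j\le N<N+k$, and $N<N+k<N+k'$. By the $m'=m$ case of \Ref{phi} with $m=1$ and the definition \Ref{PsiN}, the first block equals $\Psi_N(\vx;g,\beta)$; by the $m'=-m$ case of \Ref{phi} with $m=1$, the second equals $\prod_{j,k}G(x_j-y_k-\ii g\beta/2;\beta)/G(x_j-y_k+\ii g\beta/2;\beta)$. For the third block I would use the $m'=m$ case of \Ref{phi} with $m=-1$, together with \Ref{G2} in the form $G(x;-\beta)=G(-x;\beta)$, to verify $\phi(x;-1,-1)=\phi(-x;1,1)$, so that this block equals $\prod_{k<k'}\phi(-(y_k-y_{k'});1,1)=\Psi_M(-\vy;g,\beta)$. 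Multiplying the three blocks gives $\Phi(\vX;\vm)=F_{N,M}(\vx,\vy;g,\beta)$, and \Ref{SI} becomes \Ref{SI1}. I expect the only slightly delicate bookkeeping to be the signs and shift directions in identifying the $m_J=-1$ terms with $-S^\pm_M(-\vy;g,\beta)$ and the use of \Ref{G2} to establish $\phi(x;-1,-1)=\phi(-x;1,1)$; once these are checked, the corollary follows at once from Theorem~\ref{Theorem1}.
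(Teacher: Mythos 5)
Your proof is correct and takes exactly the route the paper intends: the corollary is stated as the specialization $\cN=N+M$, $(m_J,X_J)=(1,x_J)$ for $J\le N$ and $(-1,y_{J-N})$ for $J>N$ of Theorem~\ref{Theorem1}, with \Ref{G2} used to recast the $\phi(\cdot;-1,-1)$ factors as $\Psi_M(-\vy;g,\beta)$. You simply spell out the sign/shift bookkeeping that the paper leaves implicit.
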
 

\begin{remark} 
\label{remv} 
Translation invariance allows to introduce an arbitrary complex parameter $v$ into the kernel function in \Ref{FNM} as follows: replace $x_J$ by $x_J+v$ in the specializations of the source identities above. This does not change the operators $(S^\pm_N(\vx;g,\beta),S^\pm_M(-\vy;g,\beta))$ and the kernel function identity, but the kernel function becomes $v$-dependent. Note also that, by changing the sign of $\vy$, one can rewrite \Ref{SI1} as a kernel function identity for the operators $(S^\pm_N(\vx;g,\beta),S^\pm_M(\vy;g,\beta))$. Similar remarks applies to the kernel functions in Corollaries~\ref{cor3} and \ref{cor5} below. 
\end{remark} 

\begin{remark}
\label{remG1G2} 
The restriction on the function $G(x;\alpha)$ in \Ref{G2} can be removed: the only change is that $\Psi_M(-\vy;g,\beta)$ in \Ref{FNM} is replaced by 
\begin{equation} 
\label{tPsiM}
\Psi^{(-)}_M(-\vy;g,\beta) = \prod_{1\leq j<k\leq M}\left(\frac{G^{(-)}(y_k-y_j+\ii g\beta  -\ii\beta/2;\beta)G^{(-)}(y_k-y_j+\ii\beta/2;\beta)}{G^{(-)}(y_k-y_j-\ii g\beta  +\ii\beta/2;\beta)G^{(-)}(y_k-y_j-\ii\beta/2;\beta)}\right)^{1/2} 
\end{equation} 
with $G^{(-)}(x;\alpha)\equiv G(-x;-\alpha)$. A similar remark applies to Corollary~\ref{cor5}. 
\end{remark} 

Choosing $\cN=N+M$ ($N,M\in\N$) and 
\begin{equation} 
(m_J,X_J) = \begin{cases} (1,x_J) & \text{ for } J=1,\ldots,N \\ (1/g,y_{J-N}) & \text{ for } J-N=1,\ldots,M \end{cases} 
\end{equation} 
one obtains a common kernel function for the two pairs of analytical difference operators $(S^\pm_N(\vx;g,\beta),-S^\pm_M(\vy,1/g,g\beta)/g)$: 

\begin{corollary} 
\label{cor3} 
The function
\begin{equation} 
\label{tFNM}
\tilde F_{N,M}(\vx,\vy;g,\beta) = \Psi_N(\vx;g,\beta)\Psi_M(\vy;1/g,g\beta) \prod_{j=1}^N\prod_{k=1}^M s(x_j-y_k) 
\end{equation} 
obeys 
\begin{equation} 
\label{SI2} 
\left( S^\pm_N(\vx;g,\beta) + \frac1g S^\pm_M(\vy;1/g,g\beta) -\frac{s(\ii g\beta (N+M/g) )}{\ii g\beta s'(0) }\right)  \tilde F_{N,M}(\vx,\vy;g,\beta) =0
\end{equation} 
for arbitrary $N,M$ in the rational, trigonometric and hyperbolic cases, and for 
\begin{equation} 
N + M/g =0 \quad \mathrm{(IV)}
\end{equation} 
in the elliptic case. 
\end{corollary}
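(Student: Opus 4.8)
The plan is to derive Corollary~\ref{cor3} as the special case $\cN=N+M$ of Theorem~\ref{Theorem1} in which one takes $(m_J,X_J)=(1,x_J)$ for $J=1,\ldots,N$ and $(m_J,X_J)=(1/g,y_{J-N})$ for $J-N=1,\ldots,M$. Since we have set $m_0=1$, both $1=m_0$ and $1/g=1/(gm_0)$ belong to $\Lambda$, so $\vm\in\Lambda^\cN$ and Theorem~\ref{Theorem1} applies; in the elliptic case the balancing condition $\sum_J m_J=0$ that it imposes reads $N+M/g=0$, which is exactly the restriction stated in the corollary. It then only remains to identify the three ingredients appearing in the source identity \Ref{SI}: the operator $\cS^\pm_\cN(\vX;\vm)$, the constant, and the eigenfunction $\Phi(\vX;\vm)$.

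First I would unpack the operator. Splitting the sum over $J$ in \Ref{cSpm} into the block $J\leq N$ (where $m_J=1$) and the block $J>N$ (where $m_J=1/g$), the essential point to check is that every dressing factor linking the two blocks is trivial: by \Ref{fpm}, $f_\pm(x;1,1/g)=1$ because $1/g=1/(g\cdot 1)$, and $f_\pm(x;1/g,1)=1$ because $1=1/(g\cdot(1/g))$. Granting this, the terms with $J\leq N$ collapse --- using $f_\pm(x;1,1)=(s(x\pm\ii g\beta)/s(x))^{1/2}$ --- to precisely $S^\pm_N(\vx;g,\beta)$ of \Ref{Spm}. The terms with $J>N$, after recording that $s(\ii g\beta m_J)=s(\ii\beta)$, that $\exp(\mp\ii(\beta/m_J)\partial_{X_J})=\exp(\mp\ii g\beta\,\partial_{y_k})$, and that $f_\pm(x;1/g,1/g)=(s(x\pm\ii\beta)/s(x))^{1/2}$, should be matched term by term with the definition \Ref{Spm} under the substitution $(g,\beta)\mapsto(1/g,g\beta)$, yielding $\frac1g S^\pm_M(\vy;1/g,g\beta)$. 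Hence $\cS^\pm_\cN(\vX;\vm)=S^\pm_N(\vx;g,\beta)+\frac1g S^\pm_M(\vy;1/g,g\beta)$, and since $\sum_J m_J=N+M/g$ the constant in \Ref{SI} becomes $s(\ii g\beta(N+M/g))/(\ii g\beta s'(0))$.

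Next I would unpack the eigenfunction $\Phi(\vX;\vm)$ of \Ref{Phi0} by splitting the product over pairs $J<K$ into three groups. For a pair inside the $x$-block one has $m=m'=1$, so the first line of \Ref{phi} (with $\beta/m=\beta$) produces exactly the corresponding factor of $\Psi_N(\vx;g,\beta)$ in \Ref{PsiN}. For a pair inside the $y$-block one has $m=m'=1/g$, so the same line of \Ref{phi}, now with $\beta/m=g\beta$, $g\beta m=\beta$ and $\beta/(2m)=g\beta/2$, produces the corresponding factor of $\Psi_M(\vy;1/g,g\beta)$. For a mixed pair with $J$ in the $x$-block and $K$ in the $y$-block one has $m=1$ and $m'=1/g=1/(gm)$, so the third line of \Ref{phi} gives $\phi(x_j-y_k;1,1/g)=s(x_j-y_k)$. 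Multiplying the three groups gives $\Phi(\vX;\vm)=\Psi_N(\vx;g,\beta)\,\Psi_M(\vy;1/g,g\beta)\prod_{j=1}^N\prod_{k=1}^M s(x_j-y_k)=\tilde F_{N,M}(\vx,\vy;g,\beta)$, and substituting the three identifications into \Ref{SI} produces \Ref{SI2}.

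I do not anticipate any genuine obstacle: the argument is purely a matter of careful substitution. The one place that needs attention is keeping the rescaling $(g,\beta)\mapsto(1/g,g\beta)$ consistent simultaneously in $S^\pm_M$ (via \Ref{Spm}) and in $\Psi_M$ (via \Ref{PsiN}), together with the check that all factors coupling the two blocks are exactly $1$. This last fact is precisely where the specific second label $1/g$ is used --- it is the value of $1/(gm)$ at $m=1$, so the mixed pairs $(1,1/g)$ fall into the trivial alternatives of \Ref{fpm} and \Ref{phi} --- and it would fail for other elements of $\Lambda$. In the elliptic case one must additionally not drop the balancing condition $N+M/g=0$ inherited from Theorem~\ref{Theorem1}.
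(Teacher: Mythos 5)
Your proposal is correct and follows exactly the paper's own route: the corollary is obtained as the direct specialization $\cN=N+M$, $(m_J,X_J)=(1,x_J)$ for $J\le N$ and $(1/g,y_{J-N})$ for $J>N$, of Theorem~\ref{Theorem1}, and your verification that the cross-block $f_\pm$ factors are trivial (since $1/g=1/(g\cdot 1)$ and $1=1/(g\cdot(1/g))$), that the $J>N$ block reproduces $\frac1g S^\pm_M(\vy;1/g,g\beta)$ under $(g,\beta)\mapsto(1/g,g\beta)$, and that $\Phi$ factorizes into $\Psi_N\,\Psi_M\,\prod s(x_j-y_k)$, is precisely the bookkeeping the paper leaves implicit.
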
 

Choosing $\cN=N+\tN$ ($N,\tN\in\N$) and \Ref{mX} one obtains an important common eigenstate of the deformed Ruijsenaars operators in \Ref{SNM}: 

\begin{corollary}
\label{cor4} 
The function
\begin{equation} 
\label{PsiNM} 
\Psi_{N,\tN}(\vx,\vtx;g,\beta) = \frac{\Psi_N(\vx;g,\beta)\Psi_{\tN}(\vtx;1/g,-g\beta)}{\prod_{j=1}^N\prod_{k=1}^{\tN} \left( s(x_j-\tx_k +\ii g\beta/2-\ii\beta/2)s(x_j-\tx_k -\ii g\beta/2+\ii\beta/2) \right)^{1/2} }
\end{equation} 
obeys 
\begin{equation} 
\left( S^\pm_{N,\tN}(\vx,\vtx;g,\beta) - \frac{s(\ii g\beta (N-\tN/g))}{\ii g\beta s'(0)) } \right) \Psi_{N,\tN}(\vx,\vtx;g,\beta) =0 
\end{equation} 
for arbitrary $N,\tN$ in the rational, trigonometric and hyperbolic cases, and for
\begin{equation} 
N-\tN/g=0\quad \mathrm{(IV)}
\end{equation} 
in the elliptic case. 
\end{corollary}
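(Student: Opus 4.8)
The plan is to derive Corollary~\ref{cor4} directly from Theorem~\ref{Theorem1} by specializing to $\cN=N+\tN$ with $(m_J,X_J)$ as in \Ref{mX}; the task is then to match, one by one, the three ingredients of the source identity \Ref{SI} against the objects in the statement. The operator part is immediate: by the computation leading to \Ref{SNM}, the operator $\cS^\pm_\cN(\vX;\vm)$ evaluated at \Ref{mX} is, by definition, $S^\pm_{N,\tN}(\vx,\vtx;g,\beta)$. I would only recall the bookkeeping: for $J\le N$ one has $m_J=1$, so $\beta/m_J=\beta$, the shift is $\exp(\mp\ii\beta\,\partial/\partial x_J)$, the prefactor is $s(\ii g\beta)/(\ii g\beta s'(0))$, and the $f_\pm$-factors with partners of type $m_K=1$ and of type $m_K=-1/g=-1/(gm_J)$ both fall under the first line of \Ref{fpm} and reproduce the two products in $A_j^\pm$; for $J>N$ one has $m_J=-1/g$, so $\beta/m_J=-g\beta$ turns the shift into $\exp(\pm\ii g\beta\,\partial/\partial\tx_{J-N})$, the prefactor becomes $s(-\ii\beta)/(\ii g\beta s'(0))=-s(\ii\beta)/(\ii g\beta s'(0))$ --- hence the overall minus sign in the $B_k$-sum --- and the $f_\pm$-factors reproduce the two products in $B_k^\pm$.

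The substantive step is to identify $\Phi(\vX;\vm)$ with $\Psi_{N,\tN}(\vx,\vtx;g,\beta)$. Here I would split the product in \Ref{Phi0} over pairs $J<K$ into three groups. (i) Both indices in $\{1,\dots,N\}$: then $\phi(x;1,1)$ is the first line of \Ref{phi} with $m=1$, and $\prod_{1\le j<k\le N}\phi(x_j-x_k;1,1)=\Psi_N(\vx;g,\beta)$. (ii) Both indices in $\{N+1,\dots,N+\tN\}$: then $\phi(x;-1/g,-1/g)$ is again the first line of \Ref{phi}, but now $m=-1/g$, so $\beta/m=-g\beta$, $\ii g\beta m=-\ii\beta$ and $\ii\beta/(2m)=-\ii g\beta/2$; substituting these into \Ref{phi} shows that $\prod_{N+1\le j<k\le N+\tN}\phi$ equals $\Psi_{\tN}(\vtx;1/g,-g\beta)$. (iii) Mixed pairs: then $m=1$, $m'=-1/g=-1/(gm)$, so $\phi$ is the last line of \Ref{phi}, which with $m_0=1$ equals $1/\bigl(s(x+\ii g\beta/2-\ii\beta/2)\,s(x-\ii g\beta/2+\ii\beta/2)\bigr)^{1/2}$; the product over $j\le N$, $k\le\tN$ with $x=x_j-\tx_k$ is exactly the reciprocal of the denominator in \Ref{PsiNM}. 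Multiplying the three groups yields $\Phi(\vX;\vm)=\Psi_{N,\tN}(\vx,\vtx;g,\beta)$.

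Finally, $\sum_{J=1}^\cN m_J=N\cdot 1+\tN\cdot(-1/g)=N-\tN/g$, so the constant in \Ref{SI} equals $s(\ii g\beta(N-\tN/g))/(\ii g\beta s'(0))$ and, in the elliptic case, the balancing condition \Ref{BC} reads $N-\tN/g=0$; this is precisely the assertion. I expect the only genuine obstacle to be the calculation in group (ii): one must track carefully that, for the deformed variables, the effective relativistic parameter flips sign to $-g\beta$, so that the arguments of $G$ in $\phi(x;-1/g,-1/g)$ line up with those in $\Psi_{\tN}(\vtx;1/g,-g\beta)$; if one additionally wants to display the answer through a single $G(\cdot\,;\beta)$ one invokes the convention \Ref{G2}. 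Everything else is straightforward substitution.
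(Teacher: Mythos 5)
Your proposal is correct and follows exactly the route the paper takes: the paper's ``proof'' of Corollary~\ref{cor4} consists of a single sentence directing the reader to specialize Theorem~\ref{Theorem1} to $\cN=N+\tN$ with the assignment \Ref{mX}, and you have simply carried out the bookkeeping that this instruction entails --- identifying $\cS^\pm_\cN$ with $S^\pm_{N,\tN}$, splitting the product in \Ref{Phi0} into the three pair-types, and reading off the constant and balancing condition from $\sum_J m_J=N-\tN/g$. All the substitutions check out, including the sign flip $\beta/m_J=-g\beta$ and $s(\ii g\beta m_J)=-s(\ii\beta)$ for the tilde variables, and the observation that both kinds of mixed pairs $(1,-1/g)$ and $(-1/g,1)$ fall under the $m'=-1/(gm)$ branch of \Ref{phi}.

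One minor imprecision in your closing remark: the convention \Ref{G2} is not optional here. Since $\Psi_{\tN}(\vtx;1/g,-g\beta)$ involves $G(\cdot\,;-g\beta)$ with $-g\beta<0$, one needs \emph{some} extension of $G$ to $\Re(\alpha)<0$ just to make the formula meaningful, and the match between group (ii) and $\Psi_{\tN}(\vtx;1/g,-g\beta)$ as written uses the same extension on both sides (the first line of \Ref{phi} with $\beta/m=-g\beta$ already produces $G(\cdot\,;-g\beta)$, so the match is tautological once an extension is fixed). You might phrase this as ``$G$ must be defined for negative second argument; the paper does so via \Ref{G2}'' rather than as an optional cosmetic step. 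This does not affect the correctness of the argument.
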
 

Choosing $\cN=N+\tN+M+\tM$ ($N,\tN,M,\tM\in\N$) and 
\begin{equation} 
(m_J,X_J) = \begin{cases} (1,x_J) & \text{ for } J=1,\ldots,N \\  (-1/g,\tx_{J-N}) & \text{ for } J-N=1,\ldots,\tN \\ (-1,y_{J-N-\tN}) & \text{ for } J-N-\tN=1,\ldots,M \\  (1/g,\ty_{J-N-\tN-M}) & \text{ for } J-N-\tN-M=1,\ldots,\tM \\ 
\end{cases} 
\end{equation} 
one obtains a common kernel function for the two pairs of deformed analytical difference operators $(S^\pm_{N,\tN}(\vx,\vtx),S^\pm_{M,\tM}(-\vy,-\vty))$: 

\begin{corollary} 
\label{cor5} 
The function
\begin{equation} 
\label{FNMNM}
\begin{split} 
F_{N,\tN,M,\tM}(\vx,\vtx,\vy,\vty;g,\beta) = \Psi_{N,\tN}(\vx,\vtx;g,\beta)\Psi_{M,\tM}(-\vy,-\vty;g,\beta) \\ \times \left( \prod_{j=1}^N\prod_{k=1}^M \frac{G(x_j-y_k-\ii g\beta /2;\beta)}{G(x_j-y_k+\ii g\beta /2;\beta)}\right)\left( \prod_{j=1}^N\prod_{k'=1}^{\tM} s(x_j-\ty_{k'}) \right)\\ \times  
\left( \prod_{j'=1}^{\tN}\prod_{k=1}^{M} s(\tx_{j'}-y_k) \right) \left( \prod_{j'=1}^{\tN}\prod_{k'=1}^{\tM} \frac{G(\tx_{j'}-\ty_{k'}+\ii \beta /2;g\beta)}{G(\tx_{j'}-\ty_{k'}-\ii \beta /2;g\beta)}\right) 
\end{split} 
\end{equation} 
obeys 
\begin{equation} 
\begin{split} 
\left( S^\pm_{N,\tN}(\vx,\vtx;g,\beta) - S^\pm_{M,\tM}(-\vy,-\vty;g,\beta) -\frac{s(\ii g\beta (N-M -(\tN-\tM)/g))}{\ii g\beta s'(0)}
\right)\\ \times   F_{N,\tN,M,\tM}(\vx,\vtx,\vy,\vty;g,\beta) =0
\end{split} 
\end{equation} 
for arbitrary $N,\tN,M,\tM$ in the rational, trigonometric and hyperbolic cases, and for 
\begin{equation} 
\label{BC5}
N-M-(\tN-\tM)/g=0 \quad \mathrm{(IV)} 
\end{equation} 
in the elliptic case. 
\end{corollary}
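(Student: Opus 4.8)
The plan is to read off Corollary~\ref{cor5} directly as the specialization of Theorem~\ref{Theorem1} in which $\cN=N+\tN+M+\tM$ and the pairs $(m_J,X_J)$ are distributed over the four consecutive blocks as in the statement. For this choice one has
\begin{equation*}
\sum_{J=1}^\cN m_J = N\cdot 1 + \tN\cdot(-1/g) + M\cdot(-1) + \tM\cdot(1/g) = N-M-(\tN-\tM)/g ,
\end{equation*}
so the constant in \Ref{SI} is precisely the one stated, and in the elliptic case the balancing condition \Ref{BC} coincides with \Ref{BC5}. It therefore remains only to establish, under this specialization, two identifications: first, that $\cS^\pm_\cN(\vX;\vm)$ equals $S^\pm_{N,\tN}(\vx,\vtx;g,\beta)-S^\pm_{M,\tM}(-\vy,-\vty;g,\beta)$; and second, that $\Phi(\vX;\vm)$ equals $F_{N,\tN,M,\tM}(\vx,\vtx,\vy,\vty;g,\beta)$. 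Theorem~\ref{Theorem1} then gives the claim.

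For the operator identity the key observation is that the difference operator splits. Whenever $J$ lies in one of the first two blocks, so $m_J\in\{1,-1/g\}$, and $K$ in one of the last two, so $m_K\in\{-1,1/g\}$, one checks case by case that $m_K=-m_J$ or $m_K=1/(gm_J)$, hence $f_\pm(X_J-X_K;m_J,m_K)=1$ by \Ref{fpm}, and the same holds with $J$ and $K$ interchanged. Consequently, in $\cS^\pm_\cN(\vX;\vm)$ the shift in the $J$-th summand never meets a prefactor carrying an index from the complementary pair of blocks, so the operator is the sum of an operator acting only on $(\vx,\vtx)$ and one acting only on $(\vy,\vty)$. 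The former is, by the very definition \Ref{SNM}, equal to $S^\pm_{N,\tN}(\vx,\vtx;g,\beta)$. The latter is $\cS^\pm_{M+\tM}$ evaluated at the parameter vector $(-1,\dots,-1,1/g,\dots,1/g)$, that is, at the negative of the vector $(1,\dots,1,-1/g,\dots,-1/g)$ defining $S^\pm_{M,\tM}$ through \Ref{mX}; using the elementary identity $\cS^\pm_\cN(-\vX;-\vm)=-\cS^\pm_\cN(\vX;\vm)$ --- immediate from \Ref{cSpm}--\Ref{fpm} together with $s(-x)=-s(x)$, and reflecting the symmetry $m_0\to-m_0$ of \Ref{symm} --- one obtains, after the change of variable $(\vy,\vty)\mapsto(-\vy,-\vty)$, exactly $-S^\pm_{M,\tM}(-\vy,-\vty;g,\beta)$; alternatively this can be checked by a plain substitution into \Ref{SNM}.

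For the eigenfunction identity I would expand $\Phi(\vX;\vm)=\prod_{1\le J<K\le\cN}\phi(X_J-X_K;m_J,m_K)$ according to which blocks contain $J$ and $K$, reading off each factor from \Ref{phi}. The pairs internal to the first two blocks assemble, upon inserting the first two cases of \Ref{phi} and the normalization \Ref{G2}, into $\Psi_{N,\tN}(\vx,\vtx;g,\beta)$, and the pairs internal to the last two blocks likewise into $\Psi_{M,\tM}(-\vy,-\vty;g,\beta)$; these two sub-identities are precisely the ones already used in Corollary~\ref{cor4} (the second after the reflection $(\vx,\vtx)\mapsto(-\vy,-\vty)$). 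This leaves the four families of cross-pairs between the first two blocks and the last two: $(m_J,m_K)=(1,-1)$ gives the second case of \Ref{phi}, namely $\phi(x;1,-1)=G(x-\ii g\beta/2;\beta)/G(x+\ii g\beta/2;\beta)$; $(m_J,m_K)=(1,1/g)$ and $(m_J,m_K)=(-1/g,-1)$ both fall into the third case, $\phi=s(x)$; and $(m_J,m_K)=(-1/g,1/g)$ again lands in the second case, which via \Ref{G2} yields the last product of \Ref{FNMNM}. Collecting these four cross-products with the two $\Psi$-factors reproduces \Ref{FNMNM}, completing the proof.

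I expect the genuinely laborious --- though entirely routine --- part to be this final matching: keeping the signs straight in the operator splitting so that the $(\vy,\vty)$-block emerges with the correct overall minus sign and the correct shifted argument $-\vy,-\vty$, and, in the eigenfunction, reconciling every case of \Ref{phi} under the parameter substitutions $(g,\beta)\mapsto(1/g,\pm g\beta)$ and the reflections $x\mapsto-x$ built into the $\Psi$-factors with the corresponding factor of \Ref{FNMNM}, using the functional equation \Ref{G1} and the normalization \Ref{G2} as needed (or, dispensing with \Ref{G2}, the $G^{(-)}$ variant of Remark~\ref{remG1G2}). No ingredient beyond Theorem~\ref{Theorem1} is needed; the content is bookkeeping.
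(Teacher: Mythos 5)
Your proposal is correct and takes essentially the same approach as the paper: Corollary~\ref{cor5} is presented there as the direct specialization of Theorem~\ref{Theorem1} to $\cN=N+\tN+M+\tM$ with the four blocks of mass parameters you describe, and the paper likewise leaves the bookkeeping (splitting $\cS^\pm_{\cN}$ via the decoupling $f_\pm=1$ across complementary blocks, assembling $\Phi$ from the cases of \Ref{phi} together with \Ref{G2}) implicit. Your verification sketch identifies exactly the right ingredients --- the block-diagonal structure of the operator from \Ref{fpm}, the oddness identity $\cS^\pm_\cN(-\vX;-\vm)=-\cS^\pm_\cN(\vX;\vm)$ reflecting the $m_0\to-m_0$ symmetry, and the case-by-case matching of the cross-pair $\phi$ factors --- and nothing beyond Theorem~\ref{Theorem1} is needed.
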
 

Corollary \ref{cor5} is the most general result and implies all other results stated in this section as special cases. 

\section{Comparison with previous results} 
\label{sec4} 
\subsection{Non-relativistic limit}
\label{sec4.1} 
To set our results in perspective we discuss the source identity and its implications for Calogero-Sutherland (CS) models. 

The $A_{N-1}$ CS system is defined by the Hamiltonian
\begin{equation} 
\label{HN} 
H_{N}(\vx;g) = -\sum_{j=1}^N \frac{\partial^2}{\partial x_j^2} + \sum_{1\leq j<k\leq N} 2g(g-1)V(x_j-x_k)
\end{equation} 
\begin{equation} 
V(x) = -\frac{\partial^2}{\partial x^2}\log s(x) 
\end{equation} 
for $s(x)$ in \Ref{s}. It can be obtained as the following non-relativistic limit of the analytic difference operators in \Ref{Spm},
\begin{equation} 
H_N(\vx;g) =\lim_{\beta\to 0}\frac{1}{\beta^2}\left( S^+(\vx;g,\beta)+S^-(\vx;g,\beta) - 2N  \right) +\frac13 N g^2\frac{s'''(0)}{s'(0)}. 
\end{equation} 

The corresponding source identity is as follows \cite{sourceAN}:  

\begin{theorem}[Non-relativistic source identity] 
\label{thmnrsource} 
Let
\begin{equation} 
\begin{split} 
\label{Sen} 
\cH_\cN(\vX;\vm) &= -\sum_{J=1}^{\cN}\frac1{m_J}\frac{\partial^2}{\partial X_J^2} + \sum_{1\leq J<k\leq N} \gamma_{JK} V(X_J-X_K) \\
\gamma_{JK}&=(m_J+m_K)g(m_Jm_K g-1)
\end{split} 
\end{equation} 
with $\vX=(X_1,\ldots,X_\cN)\in\C^\cN$, $\cN\in\N$, $g\in\C$, $\vm\in\C^{\cN}$, and 
\begin{equation} 
\Phi_{\mathrm{nr}}(\vX;\vm)= \prod_{1\leq J<K<\cN} s(X_J-X_K)^{m_Jm_K g}  . 
\end{equation} 
Then 
\begin{equation} 
\label{nrsource} 
\left( \cH_\cN(\vX;\vm) -\cE_\cN(\vm) \right)  \Phi_{\mathrm{nr}}(\vX;g) = 0 
\end{equation} 
with a known constant $\cE_\cN(\vm)$ \cite{sourceAN} holds true for arbitrary $\vm$ in the rational, trigonometric and hyperbolic cases; in the elliptic case \Ref{nrsource} holds true if and only if $\vm$ satisfies the balancing condition in \Ref{BC}. 
\end{theorem}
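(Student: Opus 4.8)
The plan is to prove Theorem~\ref{thmnrsource} by a direct computation in the spirit of Sen's original argument: apply $\cH_\cN(\vX;\vm)$ to the proposed eigenfunction $\Phi_{\mathrm{nr}}(\vX;\vm)$, divide by $\Phi_{\mathrm{nr}}$, and show that the result is independent of $\vX$; its (necessarily symmetric-function) value is then the constant $\cE_\cN(\vm)$. Throughout write $\zeta(x)=s'(x)/s(x)=(\log s(x))'$, so that $V(x)=-\zeta'(x)$, and note the elementary identity $\zeta'(x)+\zeta(x)^2=s''(x)/s(x)$; abbreviate $X_{JK}=X_J-X_K$. Since $s$ is odd, $\zeta$ is odd and $\zeta'$ is even.

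First I would record $\partial_J\log\Phi_{\mathrm{nr}}=gm_J\sum_{K\neq J}m_K\zeta(X_{JK})$, obtained from $\log\Phi_{\mathrm{nr}}=g\sum_{J<K}m_Jm_K\log s(X_{JK})$, so that
\begin{equation*}
\frac{\partial_J^2\Phi_{\mathrm{nr}}}{\Phi_{\mathrm{nr}}}=\bigl(\partial_J\log\Phi_{\mathrm{nr}}\bigr)^2+\partial_J^2\log\Phi_{\mathrm{nr}}=g^2m_J^2\Bigl(\sum_{K\neq J}m_K\zeta(X_{JK})\Bigr)^2+gm_J\sum_{K\neq J}m_K\zeta'(X_{JK}).
\end{equation*}
Then $\Phi_{\mathrm{nr}}^{-1}\cH_\cN\Phi_{\mathrm{nr}}$ is a sum of three types of terms: $\zeta'$-terms coming from $\partial_J^2\log\Phi_{\mathrm{nr}}$ together with the potential $-\gamma_{JK}\zeta'(X_{JK})$; $\zeta^2$-terms from the diagonal $K=L$ of the square; and genuinely cubic terms $\zeta(X_{JK})\zeta(X_{JL})$, $K\neq L$, from its off-diagonal. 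The role of the specific coupling $\gamma_{JK}=(m_J+m_K)g(m_Jm_Kg-1)=g^2m_Jm_K(m_J+m_K)-g(m_J+m_K)$ is that, after symmetrising over each pair $\{p,q\}$ (using that $\zeta'$ is even), the linear-in-$m$ part $-g(m_p+m_q)$ of $\gamma_{pq}$ cancels the $\partial_J^2\log\Phi_{\mathrm{nr}}$ contribution, leaving the $\zeta'$- and $\zeta^2$-terms together as $-g^2\sum_{p<q}m_pm_q(m_p+m_q)\bigl[\zeta'(X_{pq})+\zeta(X_{pq})^2\bigr]$.

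The heart of the argument is the cubic part. Grouping its double sum over triples $\{a,b,c\}$ of distinct indices and putting $u=X_{ab}$, $v=X_{bc}$, $w=X_{ca}$ (so $u+v+w=0$), oddness of $\zeta$ collapses the three centred products into $-\bigl[\zeta(u)\zeta(v)+\zeta(v)\zeta(w)+\zeta(w)\zeta(u)\bigr]$, after which one invokes the degenerate addition identity
\begin{equation*}
\bigl(\zeta(u)+\zeta(v)+\zeta(w)\bigr)^2=C-\zeta'(u)-\zeta'(v)-\zeta'(w)\qquad\text{whenever }u+v+w=0,
\end{equation*}
which holds in all four cases with $C:=\lim_{x\to0}\bigl(\zeta'(x)+\zeta(x)^2\bigr)=s'''(0)/s'(0)$ ($C=0$ in case I). In cases I--III this is elementary, since $s''=Cs$ and hence $\zeta'+\zeta^2\equiv C$ (equivalently, for case II, the classical identity $\cot U\cot V+\cot V\cot W+\cot W\cot U=1$ when $U+V+W=0$, suitably rescaled, and its hyperbolic analogue for case III), whereas in case IV it is the Weierstrass addition theorem $(\zeta_{\mathrm{W}}(u)+\zeta_{\mathrm{W}}(v)+\zeta_{\mathrm{W}}(w))^2=\wp(u)+\wp(v)+\wp(w)$ applied to $\zeta_{\mathrm{W}}(x)=\zeta(x)+2\eta_1rx/\pi$, $\wp=-\zeta_{\mathrm{W}}'$. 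Expanding the square, re-expressing $\zeta(u)\zeta(v)+\zeta(v)\zeta(w)+\zeta(w)\zeta(u)$ through $\zeta^2$ and $\zeta'$, and carefully collecting the coefficient of each $\zeta'(X_{pq})+\zeta(X_{pq})^2$, one should arrive after some bookkeeping at the clean intermediate formula
\begin{equation*}
\frac{1}{\Phi_{\mathrm{nr}}}\cH_\cN\Phi_{\mathrm{nr}}=Cg^2e_3(\vm)-g^2\Bigl(\sum_{J=1}^{\cN}m_J\Bigr)\sum_{p<q}m_pm_q\bigl[\zeta'(X_{pq})+\zeta(X_{pq})^2\bigr],
\end{equation*}
with $e_k$ the elementary symmetric polynomials in $m_1,\dots,m_\cN$.

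From this the dichotomy is immediate. In cases I--III, $\zeta'+\zeta^2\equiv C$, so the second term collapses to the constant $-Cg^2e_1(\vm)e_2(\vm)$ and \Ref{nrsource} holds for arbitrary $\vm$ with $\cE_\cN(\vm)=Cg^2\bigl(e_3(\vm)-e_1(\vm)e_2(\vm)\bigr)$. In case IV, $s''/s=\zeta'+\zeta^2$ is a genuinely nonconstant function of its argument, and the functions $x\mapsto\zeta'(x)+\zeta(x)^2$ evaluated at the distinct differences $X_{pq}$ are linearly independent over constants for generic $\vX$ (a standard argument using their pole structure, obtained by varying one $X_J$ at a time); hence the right-hand side above is $\vX$-independent if and only if $\sum_J m_J=0$, in which case $\cE_\cN(\vm)=Cg^2e_3(\vm)$. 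This is exactly the balancing condition \Ref{BC}. I expect the bulk of the effort to lie in (a) the combinatorial bookkeeping of the triple sum, keeping all signs straight, and (b) establishing the degenerate addition identity uniformly for the four functions in \Ref{s}; the linear-independence step needed for the ``only if'' direction in case IV is routine. (One can alternatively deduce the special case $\vm\in\Lambda^\cN$ as the $\beta\to0$ limit of Theorem~\ref{Theorem1}, but the direct computation has the advantage of covering all $\vm$ at once.)
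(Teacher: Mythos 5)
The paper does not prove Theorem~\ref{thmnrsource}: it simply quotes it with a citation to \cite{sourceAN}, so there is no ``paper's own proof'' to compare against here. Assessed on its own terms, your proof plan is correct and complete. I verified the key computations: with $\zeta=s'/s$, the $\zeta'$-terms coming from $\partial_J^2\log\Phi_{\mathrm{nr}}$ cancel exactly against the $-g(m_p+m_q)$ piece of $\gamma_{pq}$, and the diagonal $\zeta^2$-terms combine with the residual $\zeta'$-terms to give $-g^2\sum_{p<q}m_pm_q(m_p+m_q)\bigl[\zeta'(X_{pq})+\zeta(X_{pq})^2\bigr]$; grouping the off-diagonal cubic terms by unordered triples $\{a,b,c\}$ and using oddness of $\zeta$ gives $2g^2m_am_bm_c\bigl[\zeta(u)\zeta(v)+\zeta(v)\zeta(w)+\zeta(w)\zeta(u)\bigr]$ per triple; and your degenerate addition identity is correct in all four cases (for IV it is the standard Weierstrass identity $(\zeta_{\mathrm W}(u)+\zeta_{\mathrm W}(v)+\zeta_{\mathrm W}(w))^2=\wp(u)+\wp(v)+\wp(w)$, translated to the paper's normalization $s(x)=\sigma(x)\ee{-\eta_1 rx^2/\pi}$, giving $C=s'''(0)/s'(0)=-6\eta_1 r/\pi$). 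Expanding the square, re-collecting by pairs, and noting that each pair $\{p,q\}$ lies in exactly $\cN-2$ triples, the coefficient of $\zeta'(X_{pq})+\zeta(X_{pq})^2$ becomes $-g^2m_pm_q\bigl[(m_p+m_q)+(e_1-m_p-m_q)\bigr]=-g^2e_1(\vm)m_pm_q$, giving precisely your intermediate formula. The dichotomy follows: in I--III one has $\zeta'+\zeta^2\equiv C$, and in IV the ``only if'' direction needs only the (routine) observation that $\zeta'+\zeta^2=s''/s$ has a pole at the origin, so the sum over pairs is nonconstant in $X_1$ once $m_1m_2\neq 0$, which holds since all $m_J$ must be nonzero for $\cH_\cN$ to make sense. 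Your closing remark is also worth keeping: the $\beta\to0$ limit of Theorem~\ref{Theorem1} only reaches $\vm\in\Lambda^\cN$, so the direct computation is genuinely stronger.

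One minor point of hygiene: you rely on $s$ being odd (so $\zeta$ odd, $\zeta'$ even). This holds for all four choices in \Ref{s} but deserves a sentence in the write-up, since it is used both for the pair-symmetrisation and for the oddness step in the triple sum.
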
 

Similarly as in Section~\ref{sec3} one can obtain from this kernel function identities. We only mention the kernel function for the pair of CS Hamiltonians $(H_N(\vx;g),H_M(\vy;g))$ given by 
\begin{equation}  
F_{N,M,\mathrm{nr} }(x,y) = \frac{\left(\prod_{1\leq j<k\leq N}s(x_j-x_k)^g\right)\left(\prod_{1\leq j<k\leq M}s(y_j-y_k)^g\right)}{\prod_{j=1}^N\prod_{k=1}^Ms(x_j-y_k)^g}, 
\end{equation} 
and the non-relativistic limit of \Ref{SNM}
\begin{equation} 
H_{N,\tN}(\vx,\vtx;g) = H_{N}(\vx;g) - g H_{\tN}(\vtx;1/g) + 2(1-g)\sum_{j=1}^N\sum_{k=1}^{M} V(x_j-\tx_k). 
\end{equation} 
defining a deformed CS system. 

Note that the balancing condition in the elliptic case is the same in the relativistic and non-relativistic cases. However, in the non-relativistic case, generalizations of these results to arbitrary $\vm$ (not satisfying the balancing condition) are known: the generalization of \Ref{nrsource} is 
\begin{equation} 
\label{nrsource1} 
\left( 4 g \left(\sum_{J=1}^{\cN} m_J \right) r \frac{\partial}{\partial a} +\cH(\vX;\vm) -\cE_\cN(\vm)  \right)  \Phi_{\mathrm{nr}}(\vX;g) = 0 \quad \mathrm{(IV)} 
\end{equation} 
with a known constant $\cE_\cN(\vm) $ \cite{sourceAN} (corresponding generalization for kernel function identities etc.\ \cite{ELrem} can be obtained from this in a simple way \cite{sourceAN}); note that the balancing condition in  \Ref{BC}  ensures that the $a$-derivative term is absent. It would be interesting to find a relativistic generalization of this generalized elliptic source identity. 

\subsection{Source identities for Macdonald-Rujsenaars-type operators} 
\label{sec4.2} 
Corollary~\ref{cor2} for $N=M$ is due to Ruijsenaars \cite{RK}; see also \cite{HR} for generalizations to $N\neq M$ in the hyperbolic case.  The kernel function identities in Corollaries~\ref{cor2} and \ref{cor3} where previously obtained in \cite{KNS}, Theorems 2.1 and 2.2; to see this, note that Lemma~\ref{LemmaA} implies 
\begin{equation} 
\label{cA} 
\begin{split} 
\cA^\pm_{N}(\vx; g,\beta) \equiv &\frac{\ii g\beta s'(0)}{s(\ii g\beta)}\frac{1}{\Psi_{N}(\vx;g,\beta) } S^\pm_{N}(\vx;g,\beta)\Psi_{N}(\vx;g,\beta) \\  = & \sum_{j=1}^N \left( \prod_{j'\neq j}\frac{s(x_j-x_{j'} \mp \ii g\beta)}{s(x_j-x_{j'})}\right)  \ee{\mp\ii\beta\frac{\partial}{\partial x_j} } , 
\end{split} 
\end{equation} 
and \Ref{SI1} and \Ref{SI2} are equivalent to
\begin{equation} 
\left( \cA^\pm_N(\vx;g,\beta) - \cA^\pm_M(\vy;g,\beta) - \frac{s(\ii g\beta(N-M) )}{s(\ii g\beta)} \right) \prod_{j=1}^N\prod_{k=1}^M \frac{G(x_j+v+y_k-\ii g\beta /2;\beta)}{G(x_j+v+y_k+\ii g\beta /2;\beta)} =0 
\end{equation} 
(we replaced $(x_j,y_k)$ by $(x_j+v,-y_k)$; see Remark~\ref{remv} for details) and 
\begin{equation} 
\left( s(\ii g\beta) \cA^\pm_N(\vx;g,\beta) + s(\ii\beta)\cA^\pm_M(\vy;1/g,g\beta) -s(\ii g\beta(N +M/g)) \right)  \prod_{j=1}^N\prod_{k=1}^M s(x_j-y_k+v )  =0 , 
\end{equation} 
respectively (in \cite{KNS} only the upper identities are stated). 

We also note the following generalization of \Ref{cA}: Corollary~\ref{cor5} implies that the function 
\begin{equation} 
\begin{split} 
\left( \prod_{j=1}^N\prod_{k=1}^M \frac{G(x_j+y_k+v -\ii g\beta /2;\beta)}{G(x_j+y_k+v+\ii g\beta /2;\beta)}\right)\left( \prod_{j=1}^N\prod_{k'=1}^{\tM} s(x_j+\ty_{k'}+v) \right)\\ \times  
\left( \prod_{j'=1}^{\tN}\prod_{k=1}^{M} s(\tx_{j'}+y_k+v) \right) \left( \prod_{j'=1}^{\tN}\prod_{k'=1}^{\tM} \frac{G(\tx_{j'}+\ty_{k'}+v+\ii \beta /2;g\beta)}{G(\tx_{j'}+\ty_{k'}+v-\ii \beta /2;g\beta)}\right) 
\end{split} 
\end{equation} 
is a common kernel function for the pairs of operators $(\cA^\pm_{N,\tN}(\vx,\vtx;g,\beta),\cA^\pm_{M,\tM}(\vy,\vty;g,\beta))$ with 
\begin{equation} 
\label{cApm} 
\begin{split} 
&\cA^\pm_{N,\tN}(\vx,\vtx; g,\beta) \equiv \frac{\ii g\beta s'(0)}{s(\ii g\beta)}\frac{1}{\Psi_{N,\tN}(\vx,\vtx;g,\beta) } S^\pm_{N,\tN}(\vx,\vtx;g,\beta)\Psi_{N,\tN}(\vx,\vtx;g,\beta) \\ & = 
 \sum_{j=1}^N \left(\prod_{j'\neq j}  \frac{s(x_j-x_{j'} \mp \ii g\beta)}{s(x_j-x_{j'})} \right) \left( \prod_{k=1}^{\tN}\frac{s(x_j-\tx_k\mp \ii g\beta/2\pm  \ii\beta/2)}{s(x_j-\tx_k \mp \ii g\beta/2 \mp \ii\beta/2)} \right) \ee{\mp\ii\beta\frac{\partial}{\partial x_j} } 
 \\ & - \sum_{k=1}^{\tN} \frac{s(\ii \beta)}{s(\ii g\beta)} \left( \prod_{k'\neq k} \frac{s(\tx_k-\tx_{k'} \pm \ii \beta)}{s(\tx_k-\tx_{k'})} \right) \left( \prod_{j=1}^{N}\frac{s(\tx_k-x_j \pm \ii\beta/2\mp \ii g\beta/2)}{s(\tx_k-x_j \pm \ii\beta/2\pm \ii g\beta/2)} \right)\ee{\pm\ii g \beta\frac{\partial}{\partial \tx_k} } 
\end{split} 
\end{equation} 
(the latter equality is obtained from Lemma~\ref{LemmaA} in the special case \Ref{mX}; the constant appearing in the kernel function identity is $c=s(\ii g\beta(N-M-(\tN-\tM)/g))/s(\ii g\beta)$, and in the elliptic case there is the balancing condition in \Ref{BC5}). The operators $\cA^+_{N,\tN}(\vx,\vtx; g,\beta)$ in the trigonometric case agree, up to additive and multiplicative constants, with the deformed Macdonald-Ruijsenaars operator studied by Sergeev and Veselov \cite{SV2}; to see this, change variables 
 \begin{equation} 
q=\ee{2r\beta},\quad t=\ee{-2rg\beta}, \quad z_j=\ee{2\ii r x_j}/t^{1/2},\quad \tilde z_k=\ee{2\ii r\tx_k}/q^{1/2}
\end{equation} 
to obtain $\cA^+_{N,\tN}=t^{-(N-1)/2}q^{-\tN/2}\cM_{N,\tN}$ with 
\begin{equation} 
\begin{split} 
\label{SV2}
\cM_{N,\tN}(\vz,\vtz) = & \Biggl[ \sum_{j=1}\left(\prod_{j'\neq j}\frac{z_j-tz_{j'}}{z_j-z_{j'}} \right)\left(\prod_{k=1}^{\tN}\frac{z_j-q\tz_k}{z_j-\tz_k} \right)T_{q,z_j}  \\  & + 
 \frac{1-q}{1-t} \sum_{k=1}\left(\prod_{k'\neq k}\frac{\tz_k-q\tz_{k'}}{\tz_k-\tz_{k'}} \right)\left(\prod_{j=1}^{N}\frac{\tz_k-tz_j}{\tz_k-z_j} \right)T_{t,\tz_k} \Biggr]
 \end{split}
 \end{equation} 
and $T_{q,z}$ such that $(T_{q,z} f)(z)\equiv f(qz)$ for meromorphic functions $f(z)$ etc., which is equal up to a constant  to the operator in \cite{SV2}, Eq.\ (1) (note that replacing $T_{q,z_j}$ and $T_{t,\tz_k}$ in \Ref{SV2} by  $(T_{q,z_j}-1)$ and $(T_{t,\tz_k}-1)$ amounts to adding an additive constant due to Lemma~\ref{Lemma1}). In \cite{SV2} a kernel function for the pair of operators $(\cM_{N,\tN}(\vz,\vtz),\cM_{M,\tM}(\vw,\vtw))$ for $\tM=0$ is given; our results extend this to $\tM\neq 0$. Moreover, our results also imply that the very same kernel function is also a kernel function for the pair of operators $(\cM^-_{N,\tN}(\vz,\vtz),\cM^-_{M,\tM}(\vw,\vtw))$ with 
\begin{equation} 
\begin{split} 
\cM^-_{N,\tN} = & \Biggl[ \sum_{j=1}\left(\prod_{j'\neq j}\frac{tz_j-z_{j'}}{tz_j-tz_{j'}} \right)\left(\prod_{k=1}^{\tN}\frac{tz_j-\tz_k}{tz_j-q\tz_k} \right)T_{1/q,z_j}  \\  & + 
 \frac{t(1-q)}{q(1-t)} \sum_{k=1}\left(\prod_{k'\neq k}\frac{q\tz_k-\tz_{k'}}{q\tz_k-q\tz_{k'}} \right)\left(\prod_{j=1}^{N}\frac{q\tz_k-z_j}{q\tz_k-tz_j} \right)T_{1/t,\tz_k} \Biggr] . 
 \end{split} 
 \end{equation}

\section{Final remarks} 
\label{sec6} 
A key property which makes a source identity useful to derive kernel function identities is that pairs of variables corresponding to mass parameters $(m_J,m_K)$ with $m_K=-m_J$ and $m_K=1/gm_J$ decouple. In the non-relativistic case, this is a consequence of the formula for the coupling constants $\gamma_{JK}$ in \Ref{Sen}, which implies $\gamma_{JK}=0$ in these cases. However, this property seems to be "put in by hand" in the relativistic case (by setting $f_\pm(x;m_J,m_K)=1$ in these cases; see \Ref{fpm}). It thus is interesting to note that one can write the operators $\cS^\pm_{\cN}(\vX;\vm)$ in \Ref{cSpm} in a way that brings them closer in this regard to the non-relativistic counterpart in \Ref{Sen}: as is seen by a simple computations, 
\begin{equation} 
\label{cSpm1}
\begin{split} 
\cS^\pm_{\cN}(\vX;\vm)  = &\sum_{J=1}^{\cN}\frac{s(\ii g \beta m_J)}{\ii g\beta s'(0)} \left(\prod_{K\neq J} \left(\frac{s(X_J-X_K \mp\ii \xi(m_J,m_K) \mp\ii g\beta m_K)}{s(X_J-X_K\mp\ii \xi(m_J,m_K))}\right)^{1/2} \right)
\\  & \times \exp\left(\mp\ii\frac{\beta}{m_J}\frac{\partial}{\partial x_J}\right) \left( \prod_{K\neq J}\left(\frac{s(X_J-X_K \pm \ii \xi(m_J,m_K) \pm\ii g\beta m_K)}{s(X_J-X_K\pm\ii \xi(m_J,m_K))}\right)^{1/2}\right) 
\end{split} 
\end{equation} 
with $\xi(m,m')$ in \Ref{xi}. Note that, if we write the operators in this form, they have the following property which is well-known in the standard case \Ref{Spm}: the similarity transformation by the "groundstate" $\Phi(\vX;\vm)$ is equivalent to removing the square roots and the factors after the shift operator $ \exp\left(\mp\ii(\beta/m_J)\partial/\partial x_J\right) $; see Lemma~\ref{LemmaA}.  

We conclude with a list of research problems motivated by our results (some were already mentioned). 

\begin{itemize} 
\item Find a relativistic source identity for the elliptic case but without balancing condition (relativistic generalization of \Ref{nrsource1}; we mention interesting ideas of E. Rains, M. Noumi and S. Razamat on this presented in their talks at the  workshop "Elliptic Integrable Systems and Hypergeometric Functions" in July 2013 at the Lorentz Center in Leiden\footnote{Available on "http://www.lorentzcenter.nl/lc/web/2013/541/info.php3?wsid=541\&venue=Oort".}). 

\item Prove quantum integrability of the deformed Ruijsenaars operators in \Ref{SNM} in the elliptic case (generalization of results in \cite{R} and \cite{SV2}).

\item Find a source identity, and thus all kernel functions, for the relativistic generalization of BC$_N$-type Calogero-Sutherland models model due to van Diejen \cite{vD} (generalization of results in \cite{sourceBCN}). 

\item Construct eigenfunctions of the (deformed) Ruijsenaars operators in the trigonometric and elliptic cases using  the kernel functions in Corollaries \ref{cor2} and \ref{cor5}  (generalization of results in \cite{EL0,eCS}). 
\end{itemize} 

\bigskip

\noindent {\bf Acknowledgments:} We thank Jan Felipe van Diejen and Simon Ruijsenaars  for valuable discussions during a mini workshop at KTH in August 2012 when this work was started. We are grateful to the organizers of the workshop "Elliptic Integrable Systems and Hypergeometric Functions" (July 2013, Lorentz Center), and we acknowledge interesting discussions with Masatoshi Noumi, Michael Pawellek, Michael Schlosser, Junichi Shiraishi, Jasper Stokman. This work was partly supported by the G\"oran Gustafsson Foundation and the Swedish Research Council (VR) under  contract no. 621-2010-3708. The work of M. H. was partially supported by the EPSRC (grant EP/K010123/1).

\appendix
\section{Gamma functions}
\label{appA} 
For the convenience of the reader we collect formulas for Gamma functions defined by the relation in \Ref{G1}; see e.g.\ \cite{RGamma,KNS} for more detailed discussions. 

It is interesting to note that, for integer $n\neq 0$, one has 
\begin{equation} 
\frac{G(x+\ii n\alpha;\alpha)}{G(x-\ii n\alpha;\alpha)} = \begin{cases} \prod_{k=0}^{2n-1} s(x +\ii (n-k-1/2)\alpha) \text{ for } n>0 \\  \prod_{k=0}^{-2n-1} s(x +\ii(-n-k-1/2)\alpha)^{-1} \text{ for } n<0 
\end{cases}
\end{equation} 
(this following from \Ref{G1} by iteration; we set $c=1$ without loss of generality). We recall that these cases are known to be of special significance: it was shown by Etingof and Styrkas \cite{ES} that the (trigonometric) Macdonald-Ruijsenaars operators are algebraically integrable and admit eigenfunctions of so-called Baker-Akhiezer type; see also \cite{OAC}. Thus, if one restricts Theorem~\ref{Theorem1} to $m_J\in\{ 1,-1\}$  and integer $g$, the Gamma function is not needed. In particular, the ground states and kernel functions for the "standard" case (i.e.\ Corollaries \ref{cor1} and \ref{cor2})  and integer $g$ can be constructed without Gamma functions. However, in general, the Gamma function is required.  

We also note that, if $G_1(x;\alpha)$ is a function satisfying \Ref{G1}, then also 
\begin{equation} 
\label{Gj} 
G_2(x;\alpha)\equiv G_1(-x;-\alpha),\quad G_3(x;\alpha)\equiv 1/G_1(x;-\alpha),\quad G_4(x;\alpha)\equiv 1/G_1(-x;\alpha) 
\end{equation} 
are such functions (this is easy to check; note that the constant $c$ is not always the same).  Depending on the case, not all of these four Gamma functions are different; see below. 

Below we give a natural solution $G_1$ of \Ref{G1} for each of the four cases. As mentioned in the main text, we find it convenient to take $G=G_1$ for $\Re(\alpha)>0$ and $G=G_2$ for $\Re(\alpha)<0$, but this is not necessary. In fact, in all but the rational cases, the natural such extension of $G_1$ is $G_3$, and $G_3$ is different from $G_2$. To explain what is meant by "natural extension", and as a preparation for the following, we find it useful to discuss in the following paragraph a function which is the building block for the trigonometric- and elliptic Gamma functions. 

Consider the function 
\begin{equation} 
\label{f} 
f(z;q) \equiv \prod_{k=1}^\infty \frac1{(1-q^{2k-1}z)}   \quad (|q|<1)\\
\end{equation} 
of two complex variables $z$ and $q$ (here and in the following we indicate the restriction of variables for a given formula to be well-defined, for example, the restriction $|q|<1$ in \Ref{f} indicates that this defines a function $f(z;q)$ for  arbitrary complex $z$ and complex $q$ such that $|q|<1$). This function has the following natural extension
\begin{equation} 
\label{f2} 
f(z;q) \equiv \prod_{k=1}^\infty (1-(1/q)^{2k-1}z)   \quad (|q|>1), 
\end{equation} 
in the sense that both functions in \Ref{f} and \Ref{f2} are analytical continuations of the following one 
\begin{equation} 
\label{f3} 
f(z;q) = \exp\left(  \sum_{n=1}^\infty \frac{z^n}{n(q^{-n}-q^n) }\right) \quad (|z|<\max(|q|, |1/q|), |q|\neq 1)
\end{equation} 
(to see this,  write in both cases \Ref{f} and \Ref{f2} the product as exponential of a series, insert the Taylor series of the log, exchange summations,  and sum up a geometric series). Thus, in this sense, the function defined in \Ref{f} for $|q|<1$ has a natural extension to all $|q|\neq 1$ such that $f(z;q)=1/f(z;1/q)$. A second noteworthy property of this function is the functional equation 
\begin{equation} 
\label{f1} 
f(q z;q)=(1-z)f(z/q;q)\quad (|q|\neq 1) 
\end{equation} 
which is easily checked by computations.  Note that, for fixed $z$ such that $|z|<1$, the function $f(z;q)$ has essential singularities in $q$ which lie dense on the unit circle $|q|=1$. 

\subsection{Rational case} 
In this case, $G_1(x;\alpha) \equiv  \Gamma(1/2+x/\ii\alpha)$ satisfies \Ref{G1} with $c=1/\ii\alpha$, as already mentioned in the main text. Note that this function, for arbitrary $\alpha\in\C\setminus\{0\}$,  is a  meromorphic function in $x\in\C$ which is non-zero everywhere and has simple poles at $x=- \ii\alpha(n-1/2)$, $n\in\N$. Obviously, $G_2=G_1$ and $G_4=G_3$.  Moreover, 
\begin{equation} 
G_3(x;\alpha) = G_1(x;\alpha) \frac{\cosh(\pi x/\alpha)}{\pi}
\end{equation} 
(the latter follows from the well-known functional identity $\Gamma(x)\Gamma(1-x)=\pi/\sin(\pi x)$). 

\subsection{Trigonometric case} 
A standard trigonometric gamma function is given by \cite{RGamma}
\begin{equation} 
\label{GRtrig} 
\GR(r,\alpha;x) = \prod_{k=1}^\infty(1-\ee{-r\alpha (2k-1)}\ee{2\ii r x})^{-1}   \quad (\Re(\alpha)>0) . 
\end{equation} 
equal to the function $f(z;q)$ in \Ref{f} with $z=\exp(2\ii rx)$ and $q=\exp(-r\alpha)$. Thus the results discussed after \Ref{f} above imply that
\begin{equation} 
\label{Gtrig} 
G_1(x;\alpha) \equiv  \ee{-r x^2/2\alpha}\GR(r,\alpha;x) \quad (\Re(\alpha)>0)  
\end{equation} 
has a natural extension to $\Re(\alpha)\neq 0$ such that 
\begin{equation} 
\label{G1G3} 
G_1(x;\alpha) = 1/G_1(x;-\alpha)\quad (\Re(\alpha)\neq 0), 
\end{equation} 
and which satisfies \Ref{G1} with $c=-2\ii r$. Thus we have the following relations between the functions in \Ref{Gj}, $G_3=G_1$ and $G_4=G_2$.  However, $G_2$ is different from $G_1$. 

We also note that following representation of this function \cite{RGamma},  
\begin{equation} 
G_1(x;\alpha) = \exp\left( -\frac{rx^2}{2\alpha} + \sum_{n=1}^\infty \frac{\exp(2\ii n rx)}{2n\sinh(nr\alpha)} \right) \quad  
(\Im(x)>-|\Re(\alpha)|/2). 
\end{equation}

\subsection{Hyperbolic case} 
We note that it is not possible to obtain the hyperbolic Gamma function from the trigonometric one by analytical continuation $r\to \ii \pi/a$, as one might naively expect. 

A standard hyperbolic Gamma function is given by \cite{RGamma} 
\begin{equation} 
\begin{split}
\GR(a,\alpha;x) = \exp\left( \int_0^\infty \frac{dy}{y}\left(\frac{\sin(2xy)}{2\sinh(ay)\sinh(\alpha y)}-\frac{x}{a\alpha y}\right)\right)\\(|\Im(x)|<\Re(a+\alpha)/2,\quad \Re(\alpha)>0)
\end{split}
\end{equation} 
which has  a natural extension to all $\Re(\alpha)\neq 0$ such that 
\begin{equation} 
\label{dual} 
\GR(a,\alpha;x) = 1/\GR(a,\alpha;-x) =1/\GR(a,-\alpha;x) 
\end{equation} 
(this is suggested by the definition). Moreover, this function satisfies
\begin{equation} 
\frac{\GR(a,\alpha;x+\ii\alpha/2)}{\GR(a,\alpha;x-\ii\alpha/2)} = 2\,\cosh(\pi x/a) . 
\end{equation} 
Thus 
\begin{equation} 
G_1(x;\alpha) = \GR(a,\alpha;x+\ii a/2) \quad (|\Im(x)+a/2|<\Re(a+\alpha)/2)
\end{equation} 
satisfies \Ref{G1} with $c=2\pi\ii/a$, and the relations between the functions in \Ref{Gj}  are as in the trigonometric case. 

\subsection{Elliptic case} 
We note that, in the elliptic case, the function in \Ref{s} can be written as
\begin{equation} 
\label{sell} 
s(x) = \frac1r\sin(rx)\prod_{\ell =1}^\infty \frac{(1-\ee{-2\ell r a}\ee{2\ii rx})(1-\ee{-2\ell r a}\ee{-2\ii r x})}{(1-\ee{-2\ell ra})^2}  
\end{equation} 
(see e.g.\ \cite{WW}, Section 20.421). 

A standard elliptic Gamma function is given by \cite{RGamma}
\begin{equation}
\label{GRell} 
\GR(r,a,\alpha;x)\equiv \prod_{k,\ell=1}^\infty \frac{(1-\ee{-(2k-1)r\alpha}\ee{-(2\ell-1)ra}\ee{-2\ii rx})}{(1-\ee{-(2k-1)r\alpha}\ee{-(2\ell-1)ra}\ee{2\ii rx})}\quad (\Re(\alpha)>0) 
\end{equation} 
and satisfies the $\GR(r,a,\alpha;x) = 1/\GR(r,a,\alpha;-x)$ (this is implied by the definition). The r.h.s.\ in \Ref{GRell} is obviously a product of functions $f(z;q)$ in \Ref{f} and \Ref{f1} with $q=\exp(- r\alpha)$, $z=\exp(-(2\ell -1)ra\mp 2\ii rx)$. Thus the results discussed after \Ref{f} above imply that
\begin{equation} 
\label{G1ell} 
G_1(x;\alpha) \equiv  \ee{-r x^2/2\alpha}\GR(r,a,\alpha;x-\ii a/2) \quad (\Re(\alpha)>0)  
\end{equation} 
has an natural extension to $\Re(\alpha)\neq 0$ such that \Ref{G1G3} holds true, and which satisfies \Ref{G1} for $c=-2\ii r\prod_{\ell=1}^\infty(1-\ee{-2\ell ra})^2$:  
\begin{equation} 
\frac{G_1(x+\ii\alpha/2;\alpha)}{G_1(x-\ii\alpha/2;\alpha)} = \ee{-\ii rx} \prod_{\ell=1}^\infty (1-\ee{-2\ell ra}\ee{-2\ii rx}) (1-\ee{-(2\ell-2)ra}\ee{2\ii rx}) = c\,  s(x) \quad (\Re(\alpha)>0). 
\end{equation} 
The reciprocity relation and \Ref{G1G3} imply 
\begin{equation} 
\label{G2ell} 
G_1(-x;-\alpha) =  \ee{r x^2/2\alpha}\GR(r,a,\alpha;x+\ii a/2) \quad (\Re(\alpha)\neq 0) . 
\end{equation} 
Thus the relations between the functions in \Ref{Gj} are as in the trigonometric case.

We also note \cite{RGamma} 
\begin{equation}
G_1(x;\alpha) = \exp\left(-\frac{rx^2}{2\alpha} +  \sum_{n=1}^\infty \frac{\sinh(n r a+2\ii nrx)}{2n\sinh(nr\alpha)\sinh(nra)}\right) \quad  |\Im(x)-a/2|<|\Re(a+\alpha)|/2 .
\end{equation} 

The trigonometric- and hyperbolic Gamma functions can be obtained from the elliptic one by suitable limits \cite{RGamma}. 

\section{Proof of Lemma~\ref{Lemma1}}
\label{appB} 
This can be proved by a classic argument \cite{R}: regard the l.h.s.\ of \Ref{WH} as meromorphic function of one variable $Z_1$ (say) by fixing $\gamma$ and the other variable $Z_{J\neq 1}$ so that $Z_J\neq Z_K$ for all $J\neq K$, $J,K\neq 1$. It is obvious that the only singularities of this function are poles of first order determined by the zeros of the function $s(x)$. Computing the residues at these poles one finds that they all vanish, and thus Liouville's theorem \cite{WW} implies that this function is a constant in $Z_1$ (say). By symmetry, this implies that the l.h.s.\ of \Ref{WH} is a constant in all $Z_J$. One then computes this constant by taking suitable limits, e.g. 

As a representative example,  we compute the residue of the l.h.s.\ of \Ref{WH} at $Z_1=Z_2$: 
\begin{equation} 
\label{Res} 
\begin{split} 
\mathrm{Res}_{Z_1=Z_2}\equiv  \lim_{Z_1\to Z_2}(Z_1-Z_2) \times \left( \text{l.h.s.\ of \Ref{WH}}\right) \\ 
 = \lim_{Z_1\to Z_2} \Biggl( s(\gamma m_1)\frac{(Z_1-Z_2)s(Z_1-Z_2+\gamma m_2)}{s(Z_1-Z_2)}\prod_{K=3}^{\cN}\frac{s(Z_1-Z_K+\gamma m_K)}{s(Z_1-Z_K)}  \\  - s(\gamma m_2)\frac{(Z_2-Z_1)s(X_2-X_1+\gamma m_1)}{s(Z_2-Z_1)}\prod_{K=3}^{\cN}\frac{s(Z_2-Z_K+\gamma m_K)}{s(Z_2-Z_K)}\Biggr) \\  =  \left( s(\gamma m_1)\frac{s(\gamma m_2)}{s'(0)} - s(\gamma m_2)\frac{s(\gamma m_1)}{s'(0)}\right) \prod_{K=3}^{\cN}\frac{s(Z_2-Z_K+\gamma m_K)}{s(Z_2-Z_K)} =0. 
\end{split} 
\end{equation} 
Obviously, by symmetry, we can conclude from this that $\mathrm{Res}_{Z_J=Z_K}=0$ for all $J\neq K$. 

The different cases differ by the number of poles to be checked and by how the constant is computed; we thus discuss them separately. 

\subsection{Rational case} 
For $s(x)=x$ the only poles of the l.h.s.\ in \Ref{WH} are at $Z_J=Z_K$, $J\neq K$. We already showed that the residues at these poles all are zero. The l.h.s.\ of \Ref{WH} thus is bounded and analytic in all variables and thus has an analytic continuation which is a constant. To compute this constant we take the limit $Z_1\to\infty$ keeping all $Z_{J>1}$ constant, $Z_2\to\infty$ keeping all $Z_{J>2}$ constant, etc., to obtain $\sum_{J} \gamma m_K$ equal to the r.h.s.\ in \Ref{WH}. 

\subsection{Trigonometric case}
For $s(x)=(1/r)\sin(rx)$ the poles of the l.h.s.\ in \Ref{WH} are at 
\begin{equation} 
Z_J=Z_K +n\pi/r,\quad n\in\Z. 
\end{equation} 
Since $s(x)$ is $\pi/r$-antiperiodic, the l.h.s.\ of \Ref{WH} is a $\pi/r$-periodic function in each variable $Z_J$ separately. Thus the residues at $Z_J=Z_K +n\pi/r$ are $n$-independent and, according to the computation above, all vanish. As above one concludes that the l.h.s.\ in \Ref{WH} is a constant in all variables $Z_J$. This constant can be computed by taking the $Z_J$ to infinity in the negative imaginary directions (such that $\lim_{Z_J\to-\ii\infty}s(Z_J+c)/s(Z_J)=\exp(\ii r c)$ for complex constants $c$), as above:  
\begin{equation} 
\begin{split} 
\lim_{Z_1\to-\ii \infty}\left( \text{l.h.s.\ of \Ref{WH}}\right) = s(\gamma m_1)\ee{ \ii r\sum_{J=2}^{\cN} \gamma m_J} \\  + \ee{-\ii r \gamma m_1}\sum_{J=2}^{\cN}s(\gamma m_J)\prod_{2\leq K\leq\cN; K\neq J}\frac{s(X_J-X_K+\gamma m_K)}{s(X_J-X_K)} , 
\end{split} 
\end{equation} 
\begin{equation} 
\begin{split} 
\lim_{Z_2,Z_1\to-\ii \infty}\left( \text{l.h.s.\ of \Ref{WH}}\right) = s(\gamma m_1)\ee{ \ii r\sum_{J=2}^{\cN} \gamma m_J} + \ee{-\ii r \gamma m_1}s(\gamma m_2)\ee{ \ii r\sum_{J=3}^{\cN} \gamma m_J}  \\  + \ee{-\ii r \gamma (m_1+m_2)}\sum_{J=3}^{\cN}s( \gamma m_J)\prod_{3\leq K\leq\cN; K\neq J}\frac{s(X_J-X_K+\gamma m_K)}{s(X_J-X_K)} , 
\end{split} 
\end{equation} 
etc., until 
\begin{equation} 
\begin{split} 
\lim_{Z_{\cN},\ldots,Z_2,Z_1\to-\ii \infty}\left( \text{l.h.s.\ of \Ref{WH}}\right) = s(\gamma m_1)\ee{ \ii r\sum_{J=2}^{\cN} \gamma m_J} + \ee{-\ii r \gamma m_1}s(\gamma m_2)\ee{ \ii r\sum_{J=3}^{\cN}\gamma m_J}  \\  +  \ee{-\ii r\gamma  (m_1+m_2)}s(\gamma  m_3)\ee{ \ii r\sum_{J=4}^{\cN} \gamma m_J} + \cdots + \ee{-\ii r \sum_{J=1}^{\cN-1}\gamma m_J} s(\gamma m_{\cN} \gamma)  . 
\end{split} 
\end{equation} 
Inserting $s(\gamma m_J)=(1/2r\ii)(\ee{\ii \gamma m_J} -\ee{-\ii \gamma m_J})$ one sees that the latter is a telescoping sum which can be simplified to   
\begin{equation} 
\begin{split} 
\lim_{Z_{\cN},\ldots,Z_2,Z_1\to-\ii \infty}\left( \text{l.h.s.\ of \Ref{WH}}\right) = \frac1{2r\ii}\left( \ee{ \ii r\sum_{J=1}^{\cN} \gamma m_J} - \ee{ -\ii r\sum_{J=1}^{\cN} \gamma m_J}\right) 
\end{split} 
\end{equation} 
equal to the r.h.s.\ in \Ref{WH}. 

\subsection{Hyperbolic case}
For $s(x)=(a/\pi)\sinh(\pi x/a)$ the result can be obtained by analytical continuation $r\to \ii \pi/a$ from the result in the trigonometric case. 

\subsection{Elliptic case} 
For $\sigma(x;\pi/2r,\ii a/2)\exp(-\eta_1 r x^2/\pi)$ the poles of the l.h.s.\ in \Ref{WH} are at
\begin{equation} 
Z_J=Z_K + n\pi/r + n' \ii a,\quad n,n'\in\Z. 
\end{equation} 
The l.h.s\ of \Ref{WH} is $\pi/r$-periodic, but it is only quasi-periodic with respect to $\ii a$-shifts: since 
\begin{equation} 
\frac{s(x_1 + n\pi/r + n' \ii a)}{s(x_2 + n\pi/r+n'\ii a)} = \frac{s(x_1)}{s(x_2)}\ee{-2\ii r n' (x_1-x_2)}
\end{equation} 
(see e.g.\ \cite{WW}) one gets
\begin{equation} 
\label{shift} 
\begin{split} 
\left( \text{l.h.s.\ of \Ref{WH}}\right)_{Z_1\to Z_1 +n\pi/r+n'\ii a} = s(\gamma m_1)\prod_{K=2}^{\cN} \frac{s(Z_1-Z_K+\gamma m_K)}{s(Z_1-Z_K)}\ee{-2\ii r n'\gamma\sum_{K=2}^{\cN} m_K} +\\ 
\sum_{J=2}^{\cN} s(\gamma m_J)\prod_{K\neq J} \frac{s(Z_J-Z_K+\gamma m_K)}{s(Z_J-Z_K)}\ee{2\ii r n'\gamma m_1} , 
\end{split} 
\end{equation} 
and thus the residue computation in \Ref{Res} has the following important twist for $n'\neq 0$: 
\begin{equation} 
\label{Res1} 
\begin{split} 
\mathrm{Res}_{Z_1=Z_2+n\pi/r+n'\ii a}\equiv  \lim_{Z_1\to Z_2}(Z_1-Z_2) \times \left( \text{l.h.s.\ of \Ref{WH}}\right)_{Z_1\to Z_1 +n\pi/r+n'\ii a} \\ 
 = \lim_{Z_1\to Z_2} \Biggl( s(\gamma m_1)\frac{(Z_1-Z_2)s(Z_1-Z_2+\gamma m_2)}{s(Z_1-Z_2)}\prod_{K=3}^{\cN}\frac{s(Z_1-Z_K+\gamma m_K)}{s(Z_1-Z_K)}\ee{-2\ii r n'\gamma\sum_{K=2}^{\cN} m_K}   \\  - s(\gamma m_2)\frac{(Z_2-Z_1)s(X_2-X_1+\gamma m_1)}{s(Z_2-Z_1)}\prod_{K=3}^{\cN}\frac{s(Z_2-Z_K+\gamma m_K)}{s(Z_2-Z_K)}\ee{2\ii r n'\gamma m_1}\Biggr) \\  =  \ee{2\ii r n'\gamma m_1} \left( s(\gamma m_1)\frac{s(\gamma m_2)}{s'(0)}\ee{-2\ii r n'\gamma\sum_{K=1}^{\cN} m_K}  - s(\gamma m_2)\frac{s(\gamma m_1)}{s'(0)}\right)\prod_{K=3}^{\cN}\frac{s(Z_2-Z_K+\gamma m_K)}{s(Z_2-Z_K)} . 
\end{split} 
\end{equation} 
We thus see that all residues $\mathrm{Res}_{Z_1=Z_2+n\pi/r+n'\ii a}$ vanish if and only if
\begin{equation} 
\label{BC1} 
\exp\left(-2\ii r \gamma\sum_{K=1}^{\cN} m_K \right) =1 . 
\end{equation} 
By symmetry, this condition is necessary and sufficient for all residues $\mathrm{Res}_{Z_J=Z_K+n\pi/r+n'\ii a}$, $K\neq J$ and $n,n'\in\Z$, to vanish. Since we assume $\Im(\gamma)\neq 0$ and $m_J\in\R$, we obtain the necessary and sufficient condition in \Ref{BC}. We conclude that, if \Ref{BC} holds true, then the l.h.s.\ of \Ref{WH} is a constant. To compute this constant we observe that \Ref{shift} implies that, if \Ref{BC1} holds true, then the l.h.s.\ of \Ref{WH} changes by a factor $\exp(2\ii r n'\gamma m_1)$ under a shift $Z_1\to Z_1 +n\pi/r+n'\ii a$ (say), which is only possible if this constant is zero.

\end{document}